\newtheorem{theorem}{Theorem}
\newtheorem{lemma}[theorem]{Lemma}
\newtheorem{proposition}[theorem]{Proposition}
\DeclareMathOperator*{\red}{red}
\DeclareMathOperator*{\bit}{bit}
\DeclareMathOperator*{\clsr}{cl}
\DeclareMathOperator*{\argmin}{arg\min}
\DeclareMathOperator*{\kl}{KL}
\newcommand{\Dz}{\Delta(Z)}
\newcommand{\Dx}{\Delta(X)}
\newcommand{\Imin}{I_{\min}}
\newcommand{\Ired}{I_{\red}}
\newcommand{\Ccl}{C_{\clsr}}
\newcommand{\proj}{\ssearrow}
\newcommand{\Ipr}[3]{I^{\pi}_{#3} \left({#1}\ssearrow{#2} \right)}
\newcommand{\qpr}[2]{p_{(#1  \proj #2)}}
\newcommand{\DivKL}[2]{D_{\kl}\left( #1 \,\middle\|\, #2 \right)}
\newcommand{\pia}[1]{\Pi'_{#1}}
\newcommand{\piaBW}[1]{\Pi_{#1}}
\newcommand{\iproj}[1]{{\pi_{#1}}}
  \pgfplotsset{articleplot/.style={
          axis x line=bottom,
          axis y line=left,
          axis line style={lightgray}
        }}
  \pgfplotsset{scatterplot/.style={
          articleplot,
          xlabel={$I_{\min}$},
          ylabel={$\Ired$},
          width=4.75cm,
          height=5.25cm,
          tick label style={font=\scriptsize}
        }}
  \pgfplotsset{helpplot/.style={
          lightgray
        }}
  \pgfplotsset{dataplot/.style={
          black!80,
          mark=x,
          mark size=1.5pt
        }}
  \def\uxcircle{(-1cm,-1cm) circle (1.5cm)}
  \def\syncircle{(0cm,0cm) circle (1.5cm)}
  \def\uycircle{(1cm,-1cm) circle (1.5cm)}
  \newcommand{\bidecomp}[4]{
    \bidecomphigh{#1}{#2}{#3}{#4}{10}{40}{40}{25}
  }
  \newcommand{\bidecomphigh}[8]{
      \path[use as bounding box] (-2cm,-2cm) rectangle (2cm,2cm);

      \begin{scope} 
        \clip \uxcircle;
        \fill[red!#6] \syncircle;
      \end{scope}

      \begin{scope} 
        \clip \uycircle;
        \fill[red!#7] \syncircle;
      \end{scope}
      \begin{scope} 
        \clip \uxcircle;
        \clip \uycircle;
        \fill[yellow!#5] \syncircle;
      \end{scope}
      \begin{scope}[even odd rule] 
        \clip \uxcircle (-3,-3) rectangle (3,3);
        \clip \uycircle (-3,-3) rectangle (3,3);
        \fill[cyan!#8] \syncircle;
      \end{scope}
      \begin{scope} 
        \clip \syncircle;
        \draw \uxcircle;
      \end{scope}
      \begin{scope} 
        \clip \syncircle;
        \draw \uycircle;
      \end{scope}
     \draw \syncircle;
     \draw (0cm,-1cm) node {\scriptsize #1}; 
     \draw (0cm,1cm) node {\scriptsize #4}; 
     \draw (-1cm,0cm) node {\scriptsize #2}; 
     \draw (1cm,0cm) node {\scriptsize #3}; 
  }
\begin{document}

\title{A Bivariate Measure of Redundant Information}
\date{\today}
\author{Malte Harder}
\email{m.harder@herts.ac.uk}
\author{Christoph Salge}
\email{c.salge@herts.ac.uk}
\author{Daniel Polani}
\email{d.polani@herts.ac.uk}
\affiliation{Adaptive Systems Research Group\\ University of Hertfordshire}

\begin{abstract}
We define a measure of redundant information based on projections in the space of probability distributions. Redundant information between random variables is information that is shared between those variables. But in contrast to mutual information, redundant information denotes information that is shared about the outcome of a third variable. Formalizing this concept, and being able to measure it, is required for the non-negative decomposition of mutual information into redundant and synergistic information. Previous attempts to formalize redundant or synergistic information struggle to capture some desired properties. We introduce a new formalism for redundant information and prove that it satisfies all the properties necessary outlined in earlier work, as well as an additional criterion that we propose to be necessary to capture redundancy. We also demonstrate the behaviour of this new measure for several examples, compare it to previous measures and apply it to the decomposition of transfer entropy.
\end{abstract}

\pacs{02.50.-r, 89.70.Cf, 05.90.+m, 89.90.+n, 89.75.-k, 05.45.Tp}

\maketitle

\section{Introduction}
\label{sec:intro}

In this paper we present a new formalism for \textit{redundant information}; measuring for three (finite) random variables $X,Y$ and $Z$ how much information the random variable $X$ contains about $Z$ that is also contained in $Y$. \textit{Information}, in this paper, is based on Shannon entropy \cite{Shannon1948}, formalizes how much information one variable contains about another, where \textit{mutual information} is the established formalism to quantify this (see \cite{Cover2006} for a detailed account).

A naive extension of mutual information to information shared among multiple variables faces several problems. Since mutual information only measures the amount of information one variable contains about another it is unclear if two variables $X$ and $Y$, which both contain information about $Z$, actually contain the ``same'' information. Alternatively, we could ask how much additional information (e.g. reduction in entropy) about $Z$ would we get from $X$, if we already knew $Y$? This can be formalized as conditional mutual information $I(Z;X|Y)=I(Z;X,Y)-I(Z;Y)$. Thus one might think that $I(Z;X) - I(Z;X|Y)$, also called \textit{interaction information} \cite{bell2003}, is a candidate for a measure of redundant information, but the problem here is that it also captures the synergy between $X$ and $Y$ in the same measurement: in some cases, e.g. for binary variables, with $Z$ being the outcome of an \textsc{Xor} combination of $X$ and $Y$, each variable by itself contains no information about $Z$, but both taken together do contain information, which would be detected by the conditional mutual information. But we want redundant information only to be present if this information about $Z$ is present in each variable on its own. Redundant as well as synergistic information is information about the output variable contained in both variables; redundant information on the one hand is directly available in each input variable, whereas synergistic information is only available in the joint variable of the inputs. As we saw, interaction information cannot distinguish between redundant information and synergistic information, and is therefore ill-suited for this purpose.

In general, we want a redundant information formalism that quantifies how much Shannon information about the outcome of a multivariate mechanism a variable provides on its own that is also provided by all other variables as well.

\section{Related Work}

Studies of synergies and redundancies have received attention in several areas including computational neuroscience \cite{gat1999, Latham2005,Brenner2000, Balduzzi2008} and genetic regulatory networks \cite{liang2008gene,Margolin2006}. However, there seems to be no agreement how to best measure redundancy and synergy. A detailed overview of the requirements for a measure of synergy and redundancy, as well as a comprehensive overview of possible candidate measures can be found in \cite{Griffith2011}.

Generalizations of mutual information have been proposed as measures of redundant information in the literature: One of them is \textit{total correlation} also called \textit{multi-information} which measures all dependencies among the individual variables \cite{ay2006}. Another generalization is called \textit{interaction information} (as used in the introductory example in Section \ref{sec:intro}), measuring the information that is shared among the variables of the system, but not shared by any subset of the variables \cite{bell2003}. However, both measures do not explain the structure of multivariate information in terms of atomic information quantities shared between variables. The former only quantifies the dependencies, where the latter has the problem of possibly being negative. Therefore, interaction information cannot distinguish between a system of independent variables and a system where redundancies and synergies between variables compensate each other. Thus, it also fails to capture the precise structure of multivariate mutual information \cite{Williams2010, Griffith2011}.

Other measures, like \textit{interaction complexity} \cite{Kahle2009} give a good insight into the structure of interactions among random variables, however interactions and redundancy, though related, are not the same, as interaction complexity does not fulfill the criteria stated in \cite{Williams2011a}. Moreover, measures of \textit{information flow} \cite{Ay2008a,janzing2012} which are able to measure the overall amount of causal information flow, still struggle with over-determination (i.e. the measurement of redundant causal information flow), which is closely related to the problem of identifying redundant information.

A new approach addressing these problems was introduced by Williams and Beer \cite{Williams2010}. It introduces a non-negative decomposition of multivariate mutual information terms $I(Z; X_1, ... , X_k)$.
The decomposition captures all redundancies and synergies between all possible subsets of the variables $X_1,...,X_k$ with respect to another random variable $Z$. Thus, the decomposition is able to reveal the atomic structure of the information that is shared by the variables $X_1,..., X_k$ and $Z$.

Williams and Beer's decomposition can be applied to other information theoretic measures like transfer entropy as well. This allows to get further insight into the information transfer between processes by distinguishing state-independent information transfer from state dependent information transfer \cite{Williams2011}.

The information decomposition relies on a measure of redundancy \cite{Williams2010}. Redundancy quantities then become the ``building blocks'' of the construction. Information in the sense of Shannon's information theory, as used here, always denotes a measure of information that one variable contains about another. The notion of redundancy then translates to information theoretic terms as the information that two variables share about another variable.

We will argue that the redundancy measure proposed by Williams and Beer, while exhibiting a number of essential properties needed to formalize redundancy, is not capturing the concept of redundancy in a fully satisfactory way. These problems have been noted by \citet{Griffith2011}, who recently proposed \cite{Griffith2012} a synergy/redundancy measure based on {\it intrinsic conditional information} \cite{Maurer1999}, which shares similarities with an {\it information bottleneck} \cite{Tishby1999b}.

 We propose a different measure for the bivariate case which addresses our concerns and we compare it to the existing measures \cite{Williams2010,Griffith2012}. The measure is based on a geometric argument and we will show that it fulfils all axioms required by Williams for a redundancy measure \cite{Williams2011a}. We also demonstrate that the non-negativity of the information decomposition is still guaranteed when using our measure. Furthermore, we will argue in favour of an additional axiom that any measure of redundancy has to fulfil.

\subsection{Minimal Information as a Measure of Redundancy}

As mentioned above, the term redundancy has been used in several contexts denoting different quantities. Here, we specificly consider information about another random variable that is shared among several random variables and we mean the same ``piece'' of information. A candidate measure for this quantity is called \textit{minimal information} and denoted by $\Imin$ \cite{Williams2010}.

Given a set of finite random variables $X_{V}=\{X_1,...,X_n\}$, the index set $V=\{1,...,n\}$ and a finite random variable $Z$ with values from $\mathcal{X}_1 \times ... \times \mathcal{X}_n$ and $\mathcal{Z}$ respectively, we denote the mutual information between $Z$ and $X_{V} $ as follows:
\begin{equation} I(Z;X_{V}) := I(Z;X_1, ..., X_n). \end{equation}
Following \cite{Williams2010}, we now define the (non-negative) \textit{specific information} \cite{deweese1999measure}, the increase in likelihood (or reduction in surprise) of the outcome of a specific event, where $A \subseteq V$, by
\begin{eqnarray} I(Z=z;A) &:=& \sum_{x_A} p(x_A|z) \left[\log \frac{1}{p(z)} - \log  \frac{1}{p(z|x_A)}\right] \\ &=& \DivKL{p(x_A|z)}{p(x_A)}, \end{eqnarray}
where $\DivKL{\cdot}{\cdot}$ is the usual Kullback-Leibler divergence. This is then be used by Williams and Beer to define the \textit{minimal information} a set of random variables contains about the outcome as
\begin{equation} \Imin(Z;A_1,...,A_k) := \sum_{z} p(z) \min_{i} I(Z=z;A_i). \end{equation}
This measure is obviously non-negative and, in fact, positive if all variables contain some information about a specific outcome (for outcomes having probabilities which do not vanish).

For the bivariate case we will change the notation slightly and use the random variables directly instead of the index set notation, so instead of $\Imin(Z;A_1,A_2)$, where $A_1$ and $A_2$ are index sets of some collection of random variables, we will directly write $\Imin(Z;X,Y)$.

\subsection{Redundancy Axioms}
\label{sec:redundancy_axiom}

In \cite{Williams2011a}, Williams states three axioms any redundancy measure has to fulfill. For any redundancy measure $I_\cap(Z;A_1,...,A_k)$ the following must hold:
\begin{description}
  \item[Symmetry] $I_\cap$ is symmetric with respect to the $A_i$'s.
  \item[Self Redundancy] $I_\cap(Z;A)$ = $I(Z;X_A)$.
  \item[Monotonicity] \begin{equation*}
    I_\cap(Z;A_1,...,A_{k-1},A_{k}) \leq I_\cap(Z;A_1,...,A_{k-1})
  \end{equation*}
  with equality if $A_{k-1} \subseteq A_{k}$.
\end{description}
From these axioms follows the non-negativity of the redundancy measure, and that it is bounded above by the mutual information between Z and each source. To prove this, note that $A_i$ are subsets of $V$ that could be empty, and for consistency $I_\cap(Z;\emptyset)=0$ by definition. It is easy to check that all three axioms are fulfilled by the measure $\Imin$ \cite{Williams2011a}.

\subsection{Why Minimal Information is not Capturing Redundancy}

This measure contradicts a basic intuition about redundancy. Let us consider the case with two binary input variables $X,Y$ (i.e. $\mathcal{X}=\mathcal{Y}=\{0,1\}$) that are independent, uniformly distributed and where $Z=(X,Y)$ is an unaltered copy of both variables, i.e. the joint distribution of $X$ and $Y$. Now we expect that there should be no redundancy between $X$ and $Y$ with regard to $Z$ because we know that $X$ and $Y$ are independent, so the information contained about $Z$ in $X$ and $Y$ respectively is clearly not the same. However, we have $\Imin(Z;X,Y)=1 \bit$.

This happens because for each outcome of $X$ or $Y$ we observe a reduction of entropy regarding an outcome $z$ (i.e. the specific information between $X$ and $z$ as well $Y$ and $z$ is positive). However, we ignore that even though $X$ and $Y$ give the same amount of information about an outcome $z$, they tell something different about the change of the distribution $p(z)$ after an observation in $X$ or $Y$ has been made. In this particular example $X$ gives information about the first component of $Z$ while, $Y$ gives information about the second component of $Z$.

More precisely the \textit{a posteriori} distributions of $Z$, $p(z|x)$ and $p(z|y)$, when either $X$ or $Y$ have been observed, give a  different kind of information (have different content) even though they give the same amount of information. The core idea therefore is to separate the contributions of $X$ and $Y$ by adopting a geometric view in the space of probability distributions over $Z$.

\section{A New Measure of Redundant Information}

To define a new (bivariate) redundancy measure we will take a geometric view on informational quantities. Information geometry is a powerful tool-set to investigate information theoretic question in the context of Riemannian manifolds \cite{Amari2001,amari2007methods}. Geometric arguments and algorithms have profound application to information theory, statistics \cite{Csiszar2004} and have been successfully employed to construct information theoretic multivariate interaction measures \cite{Kahle2009}. Information geometry deals with statistical manifolds of probability distributions equipped with the Fisher metric \cite{amari2007methods}. The Kullback-Leibler divergence is now a divergence function on the statistical manifold and thus certain helpful properties and theorems, such as the Pythagorean Theorem, can be used. Here, we will introduce concepts of information geometry only as needed as most arguments can be done on an ad-hoc basis.

\subsection{Additional Axiom}

Before we start with the construction of the measure, we want to address the shortcoming identified above. For this purpose, we propose to add an additional axiom to the axioms from Section~\ref{sec:redundancy_axiom}. We call it the \textit{identity property}, as it states how redundancy should behave with respect to a joint random variable of identical copies of the two source variables.  It requires that for any redundancy measure $I_\cap$
\begin{equation} I_\cap\left((X_{A_1},X_{A_2});A_1,A_2\right) = I(X_{A_1};X_{A_2})\end{equation}
The idea behind this additional axiom is, that if the (bivariate) mechanism we are considering is just copying the input, the redundancy must be exactly the mutual information between the variables. Given a multivariate redundancy measure the monotonicity automatically states that the multivariate redundancy is then bounded above by the minimum of pairwise mutual information terms.

\subsection{Construction of a Redundant Information Measure}

The redundancy measure we will construct is based on the notion of \textit{projected information} which we will introduce shortly. We will begin with the definition of a bivariate redundancy measure $\Ired$, i.e. we will measure the redundancy between two sources $X$ and $Y$ with respect to $Z$ denoted by $\Ired(Z;X,Y)$.

\subsubsection{Preliminaries}

In what follows, let $\Dz$ denote the space of all probability distributions over $Z$. An information projection is now defined as the minimization of the Kullback-Leibler divergence between a probability distribution in $p \in \Dz$ and a subset $B\subset\Dz$:
\begin{equation} \iproj{B}(p) :=\argmin_{r\in B} \DivKL{p}{r}. \label{eq:proj} \end{equation}
The Kullback-Leibler divergence is not symmetric, therefore it is possible to define a dual projection $\iproj{B}^*(p)$ where the parameters of $\DivKL{\cdot}{\cdot}$ are reversed (in \cite{csiszar2003}, $\iproj{B}(p)$ is called reverse information projection and $\iproj{B}^*(p)$ information projection). Here we will exclusively use the projection $\iproj{B}(p)$.

For $B\subset \Dz$, we denote the convex closure of $B$ in $\Dz$ by \begin{equation} \Ccl(B) = \{ \lambda p + (1-\lambda)q |\, p,q \in B, \lambda \in [0,1]\}.\end{equation}
As $\Dz$ is convex we have $\Ccl(B) \subseteq \Dz$. Observing an event $x$ in $X$ or $y$ in $Y$ leads to a distribution over $Z$, $p(\cdot|x)\in \Dz$ and $p(\cdot|y) \in \Dz$ respectively. Let 
\begin{equation}
  \langle X \rangle_Z := \{p(\cdot|x) : x \in \mathcal{X}\}
\end{equation}
denote the set of all conditional distributions of $Z$ for the different events of $X$. Because the marginal distributions over $Z$ are a convex combination of the conditional distributions, namely
\begin{equation}p(z) = \sum_{x} p(z|x) p(x),\end{equation}
we have that the space of distributions over $X$, i.e. $\Dx$, is embedded in $\Dz$ by the convex set
\begin{equation} \Ccl(\langle X \rangle_Z) = \Ccl \left(\{p(\cdot|x) : x \in \mathcal{X}\}\right).\end{equation}
The convex closure of $\langle X \rangle_Z$ in $\Dz$ now contains all possible marginals $p(z)$ if we do not know the actual distribution of $X$, but where the mechanism (the conditional distribution) is known. For example, the problem of finding the channel capacity between two random variables $X$ and $Z$ can now be translated to find the point in the convex closure that maximizes its Kullback-Leibler divergence from all extremal points $p(\cdot|x)$ of the convex closure (weighted by the respective probabilities $p(x)$), as this is equivalent to maximizing the mutual information between $X$ and $Z$.


\subsubsection{Projective Information}

Using information projections we can now project the conditionals of one variable onto the convex closure of the other. We denote this  projection by
\begin{equation} \qpr{x}{Y}(\cdot) := \iproj{\Ccl(\langle Y\rangle_Z)}(p(\cdot|x)). \end{equation}
The projection is not guaranteed to be unique (for uniqueness, the set we are projecting onto would need to be log-convex and not convex \cite{csiszar2003}), however this does not matter for our purposes as we will see in the next lemma.
Now, we define the \textit{projected information} of $X$ onto $Y$ with respect to $Z$ as
\begin{equation} \Ipr{X}{Y}{Z} := \sum_{z,x} p(z,x) \log \frac{\qpr{x}{ Y}(z)}{p(z)}. \label{eq:projective_info}\end{equation}
The rationale behind this construction is that the projected information quantifies the amount of information that two variables share with each other, here $X$ and $Z$, that can be expressed in terms of the information $Y$ shared with $Z$ (we are projecting onto $Y$). This is illustrated for binary input variable in FIG.~\ref{fig:projective_info}.

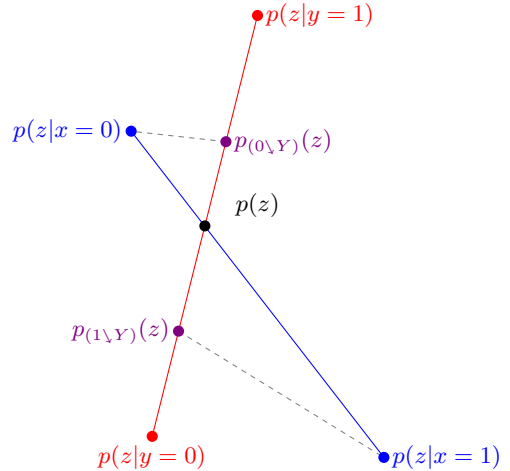
\begin{figure}[b]
  \begin{tikzpicture}[scale=1.4]
    \draw[red] (-0.5,-2) -- (0.5,2);
    \draw[blue] (-0.7,0.9) -- (1.7,-2.2);
    \draw[black!50, dash pattern=on 2pt off 2pt] (-0.25,-1) -- (1.7,-2.2);
    \draw[black!50, dash pattern=on 2pt off 2pt] (0.2,0.8) -- (-0.7,0.9);
    \fill[red!50!blue] (-0.25,-1)  circle(1.5pt) node[left] {$\qpr{1}{Y}(z)$};
    \fill[red!50!blue] (0.2,0.8)  circle(1.5pt) node[right] {$\qpr{0}{Y}(z)$};
    \fill[black] (0,0)  circle(1.5pt);
    \fill[red] (-0.5,-2)  circle(1.5pt) node[below] {$p(z|y=0)$};
    \fill[red] (0.5,2)  circle(1.5pt) node[right] {$p(z|y=1)$};
    \fill[blue] (-0.7,0.9)  circle(1.5pt) node[left] {$p(z|x=0)$};
    \fill[blue] (1.7,-2.2)  circle(1.5pt) node[right] {$p(z|x=1)$};
    \node[] at (0.5,0.2) {$p(z)$};
   \end{tikzpicture}
   \caption{\label{fig:projective_info} Construction of projective information for binary input variables.}
\end{figure}

\begin{lemma}
  Projected information $\Ipr{X}{Y}{Z}$ is well-defined, finite and non-negative.
\end{lemma}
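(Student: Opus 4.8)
The plan is to reduce all three claims to a single algebraic rewriting of \eqref{eq:projective_info}, together with one structural observation about the set being projected onto. Writing $p_x := p(\cdot\,|\,x)$ for the conditional, $q_x := \qpr{x}{Y}$ for its projection, and $m := p(\cdot)$ for the marginal on $Z$, I would first split each summand as
\begin{equation}
\sum_z p_x(z)\log\frac{q_x(z)}{m(z)} = \DivKL{p_x}{m} - \DivKL{p_x}{q_x},
\end{equation}
so that, weighting by $p(x)$ and summing over $x$,
\begin{equation}
\Ipr{X}{Y}{Z} = I(Z;X) - \sum_x p(x)\,\DivKL{p_x}{q_x}.
\end{equation}
This identity is the engine for everything that follows, because the right-hand side is expressed entirely through the \emph{value} $\DivKL{p_x}{q_x} = \min_{r\in \Ccl(\langle Y\rangle_Z)}\DivKL{p_x}{r}$ rather than through any particular minimizer.

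The structural fact I would establish next is that the marginal $m$ itself lies in $\Ccl(\langle Y\rangle_Z)$: indeed $m(z) = \sum_y p(z\,|\,y)\,p(y)$ exhibits $m$ as a convex combination of the extremal conditionals $p(\cdot\,|\,y)$. Moreover $\mathrm{supp}(p_x)\subseteq \mathrm{supp}(m)$, since $m(z)=0$ forces $p(z,x)=0$ and hence $p_x(z)=0$; this gives $\DivKL{p_x}{m}<\infty$. Because $\Ccl(\langle Y\rangle_Z)$ is the convex hull of finitely many points it is compact, and $r\mapsto\DivKL{p_x}{r}$ is lower semicontinuous, so the projection $q_x$ exists and $\DivKL{p_x}{q_x}\le \DivKL{p_x}{m}<\infty$.

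With these pieces in hand the three assertions are immediate. Well-definedness holds because the right-hand side of the identity depends on the projection only through the scalar $\min_{r}\DivKL{p_x}{r}$, which does not depend on which (possibly non-unique) minimizer is selected. Finiteness follows since $I(Z;X)\le\log|\mathcal{Z}|$ is finite and each $\DivKL{p_x}{q_x}$ is finite by the bound just established; in particular finiteness of $\DivKL{p_x}{q_x}$ forces $q_x(z)>0$ on $\mathrm{supp}(p_x)$, so no summand of \eqref{eq:projective_info} is $-\infty$. Non-negativity follows because $m\in\Ccl(\langle Y\rangle_Z)$ makes $\DivKL{p_x}{q_x}\le\DivKL{p_x}{m}$ for every $x$, whence $\sum_x p(x)\,\DivKL{p_x}{q_x}\le I(Z;X)$.

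The step I expect to require the most care is the rewriting itself, specifically justifying that the two Kullback--Leibler terms may be separated without producing an indeterminate $\infty-\infty$. The clean remedy is to verify finiteness of $\DivKL{p_x}{m}$ (via the support inclusion) \emph{before} splitting, so that both divergences on the right are finite and the manipulation is legitimate. The non-uniqueness of the projection, which at first sight looks like the obstacle to well-definedness, in fact dissolves the moment the quantity is written in terms of the minimal divergence value.
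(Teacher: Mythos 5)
Your proposal is correct and follows essentially the same route as the paper: rewrite $\Ipr{X}{Y}{Z}$ as $I(Z;X)-\sum_x p(x)\DivKL{p(z|x)}{\qpr{x}{Y}(z)}$, observe that $p(z)\in\Ccl(\langle Y\rangle_Z)$ so the subtracted divergence is bounded by $\DivKL{p(z|x)}{p(z)}$, and note that the expression depends only on the minimal divergence value, not on the choice of minimizer. Your additional attention to the support inclusion, the existence of the minimizer, and the potential $\infty-\infty$ issue is more careful than the paper's proof but does not change the argument.
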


\begin{proof}
  First, note that projected information can be written as the difference of two Kullback-Leibler divergences
  \begin{eqnarray} \Ipr{X}{Y}{Z} = \sum_x p(x) & &[ \DivKL{p(z|x)}{p(z)} \nonumber \\* &&  - \DivKL{p(z|x)}{\qpr{x}{Y}(z)}]. \nonumber\end{eqnarray}
  Therefore, if the projection is not unique, projected information only takes the KL-divergence into account which is the same for all possible solutions of the minimization problem in (\ref{eq:proj}).
  Now we have $\DivKL{p(z|x)}{\qpr{x}{Y}(z)} \leq \DivKL{p(z|x)}{p(z)}$ for all $x\in\mathcal{X}$ because of $p(z) \in \Ccl(\langle Y\rangle_Z)$ and the definition of $\qpr{x}{Y}(z)$ as the distance minimizing distribution to $p(\cdot|x)$ in $\Ccl(\langle Y\rangle_Z)$. Hence $\Ipr{X}{Y}{Z} \geq 0$. Furthermore $I(X;Z) = \sum_x p(x) \DivKL{p(z|x)}{p(z)} < \infty$.
\end{proof}

\subsubsection{Definition of Bivariate Redundancy}
The (bivariate) redundancy measure is now simply defined as the minimum of both projected information terms
\begin{equation} \Ired(Z;X,Y) := \min \{ \Ipr{X}{Y}{Z}, \Ipr{Y}{X}{Z}\}. \label{eq:redundancy_def}\end{equation}
At this point we can take the minimum over both values because we already corrected for the change of the distributions in different directions by projecting the conditionals. This is different to the approach taken by Williams and Beer \cite{Williams2010}, where the minimization does not consider that events in different source variables may change the distribution of the outcome in different directions in the geometrical space of distributions. Moreover, we define self-redundancy explicitly as
\begin{eqnarray}
  \Ired(Z;X) &:=&   \Ired(Z;X,X) \\
              &\,=&\Ipr{X}{X}{Z}.
\end{eqnarray}

\subsubsection{The Proposed Measure is a Bivariate Redundancy Measure}

To show that this is actually a redundancy measure, we have to show that it fulfils the four axioms (symmetry, self-redundancy, monotonicity and identity). Symmetry is obviously fulfilled, self-redundancy is also very quick to prove:
\begin{eqnarray}
  \Ired(Z;X) &=& \Ipr{X}{X}{Z} \\
    &=& \sum_{z,x} p(z,x) \log \frac{\qpr{x}{X}(z)}{p(z)} \\
    &=& \sum_{z,x} p(z,x) \log \frac{p(z|x)}{p(z)} \\
    &=& I(Z;X).
\end{eqnarray}
For the monotonicity axiom we first need to show $\Ired(Z;X,Y) \leq I(Z;X)$. Using the expression of projected information as a difference of Kullback-Leibler divergences we get
\begin{eqnarray}
  \Ired(Z;X,Y) &\leq& \Ipr{X}{Y}{Z} \\
    &=& \sum_x p(x) \left [ \DivKL{p(z|x)}{p(z)} \right.\nonumber \\*
    & & - \left. \DivKL{p(z|x)}{\qpr{x}{Y}(z)}\right] \\
    &=& I(Z;X) - \sum_x p(x) \DivKL{p(z|x)}{\qpr{x}{Y}(z)}. \nonumber
\end{eqnarray}
Hence it follows that $\Ired(Z;X,Y) \leq I(Z;X)$ as the KL-divergence is non-negative. To show equality holds if $X\subseteq Y$ we will first need the following two lemmas

\begin{widetext}
  \begin{lemma}\label{lem:proj1} For all $x \in \mathcal{X}$ and random variables $Y$ and $W$,
  \begin{equation} \sum p(z|x) \left(\log \qpr{x}{(Y,W)}(z)- \log \qpr{x}{Y}(z)\right) > 0. \end{equation}
\end{lemma}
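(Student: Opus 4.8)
The plan is to recognise the left-hand side as a difference of two Kullback--Leibler divergences and then exploit a nesting of convex hulls. First I would use the elementary identity $\sum_z p(z|x)\log q(z) = \sum_z p(z|x)\log p(z|x) - \DivKL{p(z|x)}{q}$, valid for any $q\in\Dz$, applied once with $q=\qpr{x}{(Y,W)}(z)$ and once with $q=\qpr{x}{Y}(z)$. Since the negative-entropy term $\sum_z p(z|x)\log p(z|x)$ is common to both, it cancels in the difference, yielding
\begin{equation*}
\sum_z p(z|x)\left(\log\qpr{x}{(Y,W)}(z) - \log\qpr{x}{Y}(z)\right) = \DivKL{p(z|x)}{\qpr{x}{Y}(z)} - \DivKL{p(z|x)}{\qpr{x}{(Y,W)}(z)}.
\end{equation*}
So the claim reduces to showing that the KL-projection of $p(\cdot|x)$ onto $\Ccl(\langle (Y,W)\rangle_Z)$ is (strictly) closer than its projection onto $\Ccl(\langle Y\rangle_Z)$.

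The key structural step is the inclusion $\Ccl(\langle Y\rangle_Z) \subseteq \Ccl(\langle (Y,W)\rangle_Z)$. I would establish this by writing each generator of the smaller hull as a convex combination of generators of the larger one: for every $y$,
\begin{equation*}
p(\cdot|y) = \sum_w p(w|y)\, p(\cdot|y,w),
\end{equation*}
which exhibits $p(\cdot|y)$ as a point of $\Ccl(\langle (Y,W)\rangle_Z)$. Since every generator $p(\cdot|y)$ of $\Ccl(\langle Y\rangle_Z)$ lies in the larger convex set, so does the whole hull. Minimising $\DivKL{p(z|x)}{\,\cdot\,}$ over a larger set can only weakly lower the attained minimum, which makes the difference above nonnegative and delivers the inequality.

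The main obstacle is the \emph{strict} inequality. The weak version follows immediately from the hull inclusion, but strictness requires that the enlarged hull actually contain a point strictly nearer $p(\cdot|x)$ than $\qpr{x}{Y}(z)$ does; equivalently, that $\qpr{x}{Y}(z)$ is not already the KL-projection of $p(\cdot|x)$ onto the larger hull. I expect this to hold precisely when $W$ supplies genuinely new directions in $\Dz$ not spanned by $\langle Y\rangle_Z$, so that the projection can move off $\Ccl(\langle Y\rangle_Z)$ toward $p(\cdot|x)$, and I would argue it through the information-geometric Pythagorean relation for KL-projections: if $\qpr{x}{(Y,W)}(z)$ differs from $\qpr{x}{Y}(z)$, the Pythagorean identity forces a strictly positive drop in divergence. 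The delicate point is ruling out the degenerate case where the two projections coincide; here one would invoke the non-degeneracy implicit in the setup (that $W$ is not conditionally redundant given $Y$), or else read the statement as the unconditional weak inequality $\ge 0$ that the argument above establishes without further hypotheses.
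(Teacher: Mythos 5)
Your proof is correct and follows essentially the same route as the paper's: the inclusion $\Ccl(\langle Y\rangle_Z) \subseteq \Ccl(\langle (Y,W)\rangle_Z)$ (exhibited via $p(\cdot|y)=\sum_w p(w|y)\,p(\cdot|y,w)$) together with the fact that minimising the Kullback--Leibler divergence over a larger set can only lower the attained minimum. Your caveat about strictness is well taken --- the paper's own proof likewise only establishes the weak inequality, which is all that is needed downstream (to conclude $\Ipr{X}{Y}{Z}\le\Ipr{X}{(Y,W)}{Z}$), so the strict ``$>$'' in the statement should indeed be read as ``$\geq$''.
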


\begin{proof}
  Let $x \in \mathcal{X}$, as $\Ccl(\langle Y \rangle_Z) \subseteq \Ccl (\langle(Y,W)\rangle_Z)$ (note that $p(y|z) = \sum_{w} p(y,w|z) $) we have due to the definition of the projection that
  \begin{eqnarray}
     \sum p(z|x) \log \frac{p(z|x)}{\qpr{x}{(Y,W)}(z)}  &\leq & \sum p(z|x) \log \frac{p(z|x)}{\qpr{x}{Y}(z)} \\
     \iff \sum p(z|x) \log \qpr{x}{(Y,W)}(z)  &\geq & \sum p(z|x) \log \qpr{x}{Y}(z)
  \end{eqnarray}

\end{proof}

\begin{lemma}\label{lem:proj2} For all $(y,w) \in \mathcal{Y}\times\mathcal{W}$
  \begin{equation} \sum p(z|y,w) \left(\log \qpr{(y,w)}{X}(z)- \log \qpr{y}{X}(z)\right) > 0. \end{equation}
\end{lemma}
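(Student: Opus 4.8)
The plan is to use the variational (optimality) characterization of the reverse information projection $\qpr{(y,w)}{X}$. By definition this projection is the minimizer of $\DivKL{p(\cdot|y,w)}{r}$ over $r \in \Ccl(\langle X\rangle_Z)$. First I would write the divergence out as
\begin{equation}
  \DivKL{p(\cdot|y,w)}{r} = \sum_z p(z|y,w)\log p(z|y,w) - \sum_z p(z|y,w)\log r(z),
\end{equation}
and observe that the first sum is independent of $r$. Hence minimizing the divergence over $\Ccl(\langle X\rangle_Z)$ is equivalent to maximizing $\sum_z p(z|y,w)\log r(z)$ over that same set, and $\qpr{(y,w)}{X}$ is the maximizer.

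Next I would note that $\qpr{y}{X}$ is itself an element of $\Ccl(\langle X\rangle_Z)$, since it is obtained by projecting $p(\cdot|y)$ onto precisely that set. It is therefore a feasible competitor in the maximization problem solved by $\qpr{(y,w)}{X}$, so optimality of the projection gives
\begin{equation}
  \sum_z p(z|y,w)\log \qpr{(y,w)}{X}(z) \geq \sum_z p(z|y,w)\log \qpr{y}{X}(z),
\end{equation}
which is exactly the asserted inequality after moving the second term to the left-hand side. This is the natural counterpart to Lemma~\ref{lem:proj1}: there one fixes the point being projected and enlarges the target set, whereas here one fixes the target set $\Ccl(\langle X\rangle_Z)$ and varies the point being projected, with the weighting distribution $p(\cdot|y,w)$ matched to the point $p(\cdot|y,w)$ whose projection realizes the maximum.

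The main obstacle I expect is the strictness of the inequality. The optimality argument only delivers $\geq$, with equality occurring exactly when $\qpr{y}{X}$ also maximizes the $(y,w)$-objective — for instance whenever $p(\cdot|y,w) = p(\cdot|y)$, in which case the two projections coincide. I would therefore treat the non-strict inequality as the substantive content, since that is what the subsequent monotonicity argument actually consumes, and read the stated $>$ as $\geq$ in line with the analogous phrasing used in the statement and proof of Lemma~\ref{lem:proj1}.
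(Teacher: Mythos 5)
Your proposal is correct and follows essentially the same route as the paper: the paper likewise invokes the minimality of $\DivKL{p(\cdot|y,w)}{\qpr{(y,w)}{X}}$ over $\Ccl(\langle X\rangle_Z)$, implicitly using that $\qpr{y}{X}$ is a feasible competitor in that set, and cancels the common entropy term to obtain the inequality. Your remark about the strict inequality is also apt --- the paper's own proof only establishes $\geq$, which is all the subsequent monotonicity proposition needs.
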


\begin{proof}
  By definition, we have that $r=\qpr{(y,w)}{X}$ is minimizing $D_{KL}(p(z|y,w)||r)$ therefore
  \begin{eqnarray}
     \sum p(z|y,w) \log \frac{p(z|y,w)}{\qpr{(y,w)}{X}(z)}  &\leq & \sum p(z|y,w) \log \frac{p(z|y,w)}{\qpr{y}{X}(z)} \\
     \iff \sum p(z|y,w) \log \qpr{(y,w)}{X}(z)  &\geq & \sum p(z|y,w) \log \qpr{y}{X}(z)
  \end{eqnarray}

\end{proof}
\end{widetext}

Now the following proposition proves the missing piece for the monotonicity.
\begin{proposition}
  $\Ired(Z;X,Y) \leq \Ired(Z;X,(Y,W))$
\end{proposition}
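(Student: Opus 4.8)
The plan is to reduce the claim to the two lemmas just established via the elementary observation that $\min\{a,b\} \le \min\{a',b'\}$ whenever $a \le a'$ and $b \le b'$ (since then $\min\{a,b\} \le a \le a'$ and $\min\{a,b\} \le b \le b'$). By the definition \eqref{eq:redundancy_def} we have $\Ired(Z;X,Y) = \min\{\Ipr{X}{Y}{Z}, \Ipr{Y}{X}{Z}\}$ and $\Ired(Z;X,(Y,W)) = \min\{\Ipr{X}{(Y,W)}{Z}, \Ipr{(Y,W)}{X}{Z}\}$, so it suffices to prove the two termwise inequalities
\[ \Ipr{X}{Y}{Z} \le \Ipr{X}{(Y,W)}{Z} \qquad\text{and}\qquad \Ipr{Y}{X}{Z} \le \Ipr{(Y,W)}{X}{Z}, \]
each of which turns out to be nothing more than the probability-weighted (integrated) form of one of Lemmas~\ref{lem:proj1} and~\ref{lem:proj2}.

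For the first inequality I would expand the difference using the definition \eqref{eq:projective_info}; the common factor $\log p(z)$ cancels, leaving
\[ \Ipr{X}{(Y,W)}{Z} - \Ipr{X}{Y}{Z} = \sum_x p(x) \sum_z p(z|x)\left(\log \qpr{x}{(Y,W)}(z) - \log \qpr{x}{Y}(z)\right). \]
Lemma~\ref{lem:proj1} asserts precisely that each inner sum is non-negative, so the whole expression is non-negative and the first inequality follows. Geometrically this is just the nesting $\Ccl(\langle Y\rangle_Z) \subseteq \Ccl(\langle (Y,W)\rangle_Z)$: projecting $p(\cdot|x)$ onto the larger convex closure can only decrease the residual divergence.

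For the second inequality the only mildly delicate point arises: before comparing, I would rewrite $\Ipr{Y}{X}{Z}$ over the finer index set $(y,w)$ by marginalizing, using $p(z,y) = \sum_w p(z,y,w)$ together with the fact that $\qpr{y}{X}(z)$ does not depend on $w$. This brings the two summations into alignment and yields
\[ \Ipr{(Y,W)}{X}{Z} - \Ipr{Y}{X}{Z} = \sum_{y,w} p(y,w) \sum_z p(z|y,w)\left(\log \qpr{(y,w)}{X}(z) - \log \qpr{y}{X}(z)\right), \]
whereupon Lemma~\ref{lem:proj2} makes every inner sum non-negative. Combining the two inequalities through the min-monotonicity observation above gives $\Ired(Z;X,Y) \le \Ired(Z;X,(Y,W))$.

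I expect the main obstacle to be purely this index-matching in the second step, namely ensuring that the summation ranges of $\Ipr{Y}{X}{Z}$ and $\Ipr{(Y,W)}{X}{Z}$ are made to coincide so that Lemma~\ref{lem:proj2} can be applied termwise; the geometric content (projections being KL-minimizers over the relevant sets) has already been absorbed into the two lemmas, so beyond the bookkeeping the argument is routine. As a remark, this proposition is exactly the ingredient needed to close the monotonicity axiom: taking $Y=X$ gives $I(Z;X) = \Ired(Z;X,X) \le \Ired(Z;X,(X,W))$, and combined with the already-established bound $\Ired(Z;X,(X,W)) \le I(Z;X)$ this forces equality whenever the second source contains the first.
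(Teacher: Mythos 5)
Your proposal is correct and follows essentially the same route as the paper: both reduce the claim to the termwise inequalities $\Ipr{X}{Y}{Z} \leq \Ipr{X}{(Y,W)}{Z}$ and $\Ipr{Y}{X}{Z} \leq \Ipr{(Y,W)}{X}{Z}$ via Lemmas~\ref{lem:proj1} and~\ref{lem:proj2} and then apply monotonicity of the minimum. You merely spell out the probability-weighted summation and the index-matching over $(y,w)$ that the paper leaves implicit in its one-line argument.
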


\begin{proof}
From Lemma~\ref{lem:proj1} it follows directly that $\Ipr{X}{Y}{Z} \leq \Ipr{X}{(Y,W)}{Z}$, furthermore from Lemma~\ref{lem:proj2}, $\Ipr{Y}{X}{Z} \leq \Ipr{(Y,W)}{X}{Z}$ respectively. Hence, we conclude $\Ired(Z;X,Y) \leq \Ired(Z;X,(Y,W))$.
\end{proof}
Now it is only left to show that the measure also fulfils our new identity property, namely
\begin{equation} \Ired((X,Y);X,Y) = I(X;Y).\end{equation}
First we need the following lemma
\begin{lemma}
  If $Z=(X,Y)$ and $(x',y')$ denote an event of $Z$ then $\qpr{y'}{X}(x',y') = \qpr{x'}{Y}(x',y') = p(x'|y')p(y'|x') $
\end{lemma}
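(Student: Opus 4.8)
The plan is to compute the projection $\qpr{y'}{X}$ explicitly by exploiting the degenerate support structure that $Z=(X,Y)$ forces on the conditional distributions, and then simply read off its value at the event $(x',y')$. Throughout I take $(x',y')$ to be an actual event of $Z$, i.e.\ $p(x',y')>0$, so that $p(x'|y')$ and $p(y'|x')$ are both positive.

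First I would observe that conditioning the copy variable $Z=(X,Y)$ collapses the supports. The distribution $p(\cdot|y')$ is carried entirely by the ``column'' $\{(x,y'):x\in\mathcal{X}\}$, with mass $p(x,y'|y')=p(x|y')$, while each extremal point $p(\cdot|x)$ of $\langle X\rangle_Z$ is carried by the disjoint ``row'' $\{(x,y):y\in\mathcal{Y}\}$, with mass $p(x,y|x)=p(y|x)$. Consequently a generic element of $\Ccl(\langle X\rangle_Z)$, written as the convex combination $r=\sum_x\lambda_x\,p(\cdot|x)$ with $\lambda_x\ge 0$ and $\sum_x\lambda_x=1$, takes the simple product form $r(x,y)=\lambda_x\,p(y|x)$.

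Next I would write out the divergence to be minimized. Because $p(\cdot|y')$ lives on the single column $y=y'$, the sum defining $\DivKL{p(\cdot|y')}{r}$ reduces to $\sum_x p(x|y')\log\frac{p(x|y')}{\lambda_x\,p(y'|x)}$. Splitting the logarithm separates a constant entropy term and a constant $\sum_x p(x|y')\log p(y'|x)$ term from the only $\lambda$-dependent piece, $-\sum_x p(x|y')\log\lambda_x$. Minimizing the divergence over $\Ccl(\langle X\rangle_Z)$ is therefore equivalent to maximizing $\sum_x p(x|y')\log\lambda_x$ over the probability simplex, and by Gibbs' inequality (equivalently, non-negativity of $\DivKL{p(\cdot|y')}{\lambda}$) the maximizer on the support is $\lambda_x=p(x|y')$. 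Substituting gives $\qpr{y'}{X}(x,y)=p(x|y')\,p(y|x)$, so that evaluation at $(x',y')$ yields $p(x'|y')\,p(y'|x')$.

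Finally, the second equality follows by the identical computation with the roles of $X$ and $Y$ interchanged: the projection $\qpr{x'}{Y}$ has the form $s(x,y)=p(y|x')\,p(x|y)$, whose value at $(x',y')$ is again $p(y'|x')\,p(x'|y')$. I expect the main obstacle to be bookkeeping the sparse supports correctly — in particular justifying that the candidate point genuinely lies in $\Ccl(\langle X\rangle_Z)$, that the KL sum collapses to the column $y=y'$, and that the minimizer is unambiguous at $(x',y')$ on the support — rather than the optimization itself, which is the standard Gibbs argument.
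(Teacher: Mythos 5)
Your proof is correct and follows essentially the same route as the paper: parametrize a generic element of $\Ccl(\langle X\rangle_Z)$ by simplex weights, use the support collapse forced by $Z=(X,Y)$ to reduce $\DivKL{p(\cdot|y')}{r}$ to a function of those weights, and identify the minimizer as $\lambda_x=p(x|y')$. The only difference is cosmetic: where the paper verifies the KKT conditions and appeals to convexity, you split off the constant terms and invoke Gibbs' inequality, which reaches the same optimizer slightly more directly.
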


\begin{proof}
  Let $r\in \Ccl(\langle X \rangle_Z)$, it is of the form \begin{equation}r(x',y') = \sum_{x} \alpha_{x} p(x',y'|x) = \alpha_{x'} p(y'|x'),\end{equation} where $\alpha_{x}\geq0$ and $\sum \alpha_{x} = 1$. We also have
  \begin{eqnarray} \DivKL{p(\cdot|y)}{r} &=& \sum_{x',y'} p(x',y'|y) \log \frac{p(x',y'|y)}{\alpha_{x'} p(y'|x')} \\
  &=& \sum_{x'} p(x'|y) \log \frac{p(x'|y)}{\alpha_{x'} p(y|x')}\label{eq:min_prob}\end{eqnarray}
A simple calculation shows that the point $\alpha_{x'} = p(x'|y)$ fulfills the Karush-Kuhn-Tucker (KKT) conditions \cite{kuhntucker1950} for the minimization of Eq.~\eqref{eq:min_prob} with respect to the vector $\alpha_{x'}$ and the simplex constraints. The KL-divergence is convex in the second parameter and thus it follows from the KKT conditions that $\alpha_{x'} = p(x'|y)$ is a global solution for the constrained minimization of the KL-divergence $\DivKL{p(\cdot|y)}{r}$ parametrized by $\alpha_x$ as in Eq.~\eqref{eq:min_prob} and in turn $r(x',y')=p(x'|y)p(y|x')$. If we now set $y'=y$ then we get
$\qpr{y'}{X}(x',y') = p(x'|y')p(y'|x')$ and $\qpr{x'}{Y}(x',y') = p(x'|y')p(y'|x')$ respectively.
\end{proof}
And hence we can conclude our proof with the following proposition:
\begin{proposition}
  \label{prop:adjoin}
  $\Ipr{X}{Y}{X,Y} = \Ipr{Y}{X}{X,Y}  = I(X;Y)$
\end{proposition}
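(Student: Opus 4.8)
The plan is to establish both equalities by a direct computation that rests on the preceding lemma, which supplies the values of the projections $\qpr{x'}{Y}$ and $\qpr{y'}{X}$ on the ``diagonal'' event $(x',y')$. First I would expand the definition of projected information with $Z=(X,Y)$ and $z=(x',y')$:
\[
\Ipr{X}{Y}{X,Y} = \sum_{(x',y'),\,x} p\big((x',y'),x\big)\,\log\frac{\qpr{x}{Y}(x',y')}{p(x',y')}.
\]

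The crucial structural observation, and the step I expect to be the real (if modest) obstacle, is that because $Z=(X,Y)$ the variable $X$ is a deterministic coordinate of $Z$. Consequently the joint weight $p\big((x',y'),x\big)$ vanishes unless $x=x'$, where it equals $p(x',y')$. This collapses the double sum to a single sum over events $(x',y')$ in which the observed value of $X$ always coincides with the first coordinate of $z$. This is exactly the regime covered by the preceding lemma: I only ever need the diagonal value $\qpr{x'}{Y}(x',y')$, never the off-diagonal projections $\qpr{x}{Y}(x',y')$ with $x\neq x'$, which the lemma neither determines nor requires.

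With the sum reduced to $\sum_{(x',y')} p(x',y')\log\big(\qpr{x'}{Y}(x',y')/p(x',y')\big)$, I would substitute the lemma's value $\qpr{x'}{Y}(x',y')=p(x'|y')p(y'|x')$ and simplify the ratio using $p(x',y')=p(x')p(y'|x')$, so that the $p(y'|x')$ factors cancel and the summand becomes $\log\big(p(x'|y')/p(x')\big)=\log\big(p(x',y')/(p(x')p(y'))\big)$. Summed against $p(x',y')$, this pointwise term is precisely $I(X;Y)$. For the second equality I would run the identical argument with the roles of $X$ and $Y$ exchanged: now $Y$ is the coordinate pinning the sum to $y=y'$, and the lemma's companion value $\qpr{y'}{X}(x',y')=p(x'|y')p(y'|x')$ produces the same cancellation and the same limit $I(X;Y)$. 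Since the lemma already records $\qpr{y'}{X}(x',y')=\qpr{x'}{Y}(x',y')$, the two computations are genuinely symmetric, so both projected information terms land on $I(X;Y)$ and the proposition follows.
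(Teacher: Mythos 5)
Your proposal is correct and follows essentially the same route as the paper: expand the definition, use the fact that $x$ must equal the first coordinate of $z$ to collapse the sum, substitute the preceding lemma's value $p(x'|y')p(y'|x')$ for the projection, and recognize $I(X;Y)$. The only cosmetic difference is that you cancel $p(y'|x')$ directly inside the logarithm, whereas the paper routes the final algebra through $H(X,Y)-H(X|Y)-H(Y|X)$; your explicit remark that only the diagonal values of the projection are ever needed is a nice clarification of a step the paper leaves implicit.
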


\begin{proof}
  Without loss of generality,
  \begin{eqnarray}
    & & \Ipr{X}{Y}{X,Y} \\
    &=& \sum_{x',y',x} p(x',y',x) \log \frac{\qpr{x}{Y}(x',y')}{p(x',y')} \\
    &=& H(X,Y) + \sum_{x',y'}  p(x',y') \log \qpr{x'}{Y}(x',y') \nonumber \\
    &=& H(X,Y) + \sum_{x,y} p(x,y) \log [p(x|y)p(y|x)] \nonumber \\
    &=& H(X,Y) - H(X|Y) - H(Y|X) \\
    &=& I(X;Y).
  \end{eqnarray}

\end{proof}
Thus $\Ired$ is a good candidate for measuring redundancy (in terms of redundancy with respect to some target variable).

\section{Comparisons}

Now that we have constructed a bivariate redundancy measure, we will present a few examples of redundancy calculations.
\subsection{Relation to Minimal Information}

There are some cases where $\Ired$ and $\Imin$ coincide and we will have a look at some of these cases later in Section~\ref{sec:examples}. In general there is a tendency of $\Imin$ to overestimate redundancy and in our examples it seems that $\Imin$ is an upper bound for $\Ired$ in most cases. There are a few exceptions, but it is not yet clear for which cases these exceptions appear or whether they are due to numerical instabilities. The overestimation of redundancy by $\Imin$ becomes predominant if the dimension of $Z$ is increased (see FIG.~\ref{fig:random_compare}). The explanation for this is that, the higher the dimension of the space gets, the larger the error becomes which results from not taking directionality into account.

\begin{figure}[b]
  \begin{tikzpicture}[scale=1.9]
    \bidecomp{$\{X\}\{Y\}$}{$\{X\}$}{$\{Y\}$}{$\{X,Y\}$}
  \end{tikzpicture}
  \caption{PI-diagram for the decomposition of the mutual information between $Z$ and $X,Y$ into PI-atoms. $\{X,Y\}$ denotes the synergistic, $\{X\}, \{Y\}$ the unique and $\{X\}\{Y\}$ the redundant part of the mutual information. }
  \label{fig:decomp_example}
\end{figure}
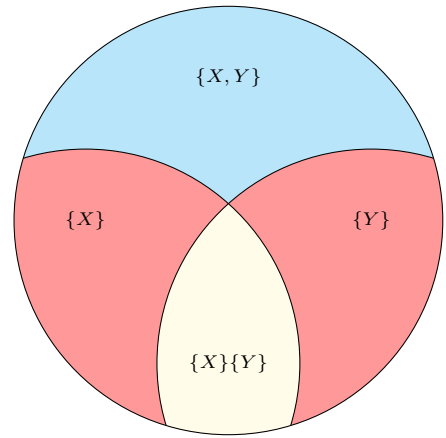

\begin{figure*}
    \centering
  \subfloat[$|\mathcal{Z}|=2$]{\label{fig:rc_z2}
      \begin{tikzpicture}
       \begin{axis}[
         scatterplot,
         xmin=0, xmax=0.5,
         ymin=0, ymax=0.5]
         \addplot[no marks, helpplot] {x};
         \addplot[only marks, dataplot] table[x index=7, y index=3] {data/random_z2_s.dat}; 
         \label{plot:inter_type}
       \end{axis}
    \end{tikzpicture}
  }\hskip1cm
  \subfloat[$|\mathcal{Z}|=4$]{\label{fig:rc_z4}
      \begin{tikzpicture}
       \begin{axis}[
         scatterplot,
         xmin=0, xmax=0.5,
         ymin=0, ymax=0.5]
         \addplot[no marks, helpplot] {x};
         \addplot[only marks, dataplot] table[x index=7, y index=3] {data/random_z4_s.dat}; 
       \end{axis}
    \end{tikzpicture}
  }\hskip1cm
  \subfloat[$|\mathcal{Z}|=6$]{\label{fig:rc_z6}
      \begin{tikzpicture}
       \begin{axis}[
         scatterplot,
         xmin=0, xmax=0.35,
         ymin=0, ymax=0.35]
         \addplot[no marks, helpplot] {x};
         \addplot[only marks, dataplot] table[x index=7, y index=3] {data/random_z6_s.dat}; 
       \end{axis}
    \end{tikzpicture}
  }\vskip0.75cm
  \subfloat[$|\mathcal{Z}|=8$]{\label{fig:rc_z8}
      \begin{tikzpicture}
       \begin{axis}[
         scatterplot,
         xmin=0, xmax=0.3,
         ymin=0, ymax=0.3]
         \addplot[no marks, helpplot] {x};
         \addplot[only marks, dataplot] table[x index=7, y index=3] {data/random_z8_s.dat}; 
       \end{axis}
    \end{tikzpicture}
  }\hskip1cm
  \subfloat[$|\mathcal{Z}|=20$]{\label{fig:rc_z20}
      \begin{tikzpicture}
       \begin{axis}[
         scatterplot,
         xmin=0, xmax=0.15,
         ymin=0, ymax=0.15]
         \addplot[no marks, helpplot] {x};
         \addplot[only marks, dataplot] table[x index=7, y index=3] {data/random_z20_s.dat}; 
       \end{axis}
    \end{tikzpicture}
  }\hskip1cm
  \subfloat[$|\mathcal{Z}|=40$]{\label{fig:rc_z40}
      \begin{tikzpicture}
       \begin{axis}[
         scatterplot,
         xmin=0, xmax=0.13,
         ymin=0, ymax=0.13]
         \addplot[no marks, helpplot] {x};
         \addplot[only marks, dataplot] table[x index=7, y index=3] {data/random_z40_s.dat}; 
       \end{axis}
    \end{tikzpicture}
  }
  \caption{Comparison of $\Imin$ and $\Ired$ for random distributions $p(x,y,z)$ and $|\mathcal{X}|=|\mathcal{Y}|=3$ and different sizes of $\mathcal{Z}$. Note that as the dimension goes up, $\Imin$ gets larger in comparison to $\Ired$. The distributions are initialized with random values, additionally the probability of each event being set to $0$ with probability $0.5$. }
  \label{fig:random_compare}
\end{figure*}
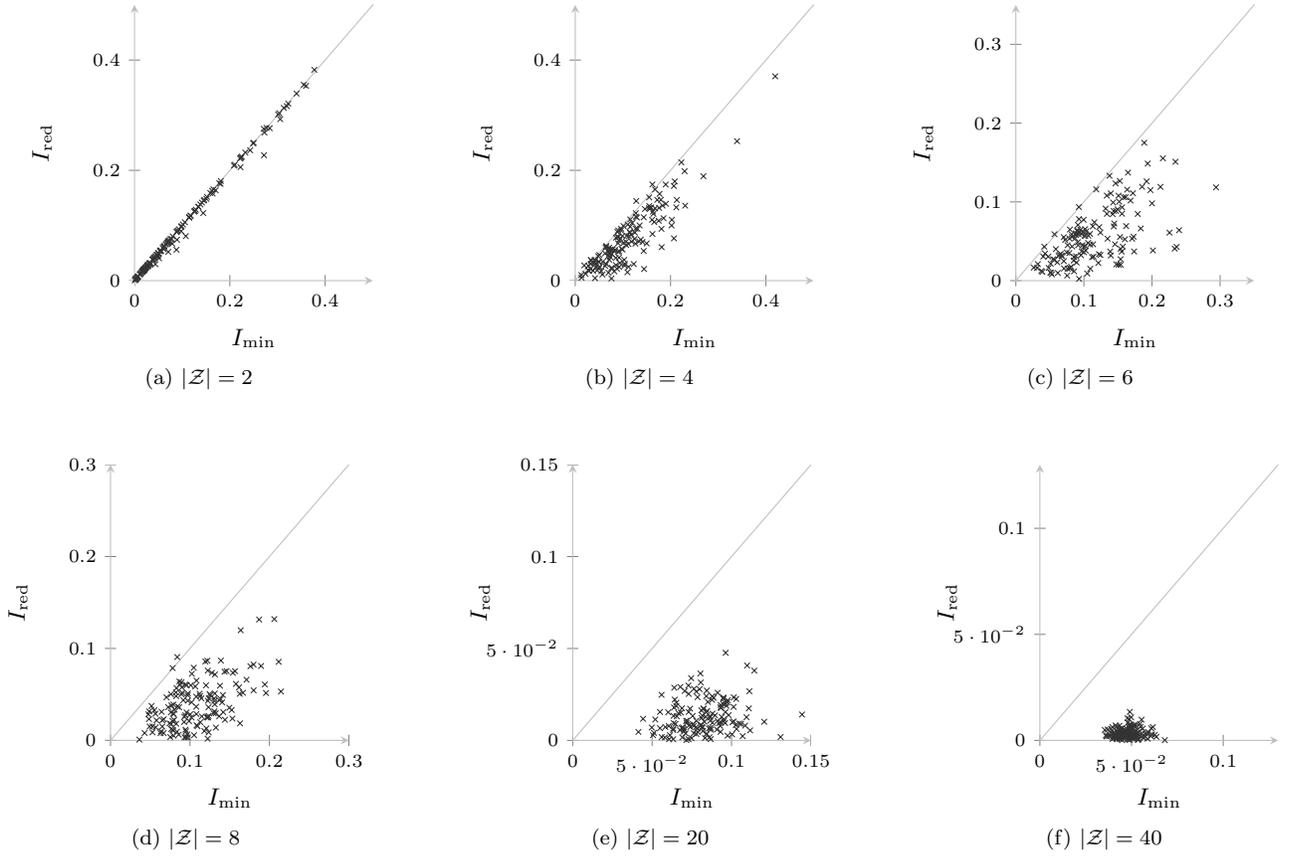

\subsection{Decomposition of Mutual Information}

In \cite{Williams2010} Williams and Beer introduce partial information atoms (PI-atoms) as a way to decompose multivariate mutual information into non-negative terms. These terms can be defined for any multivariate redundancy measure and denote redundant and synergistic contributions between several variables of a set of random variables $\mathbf{R}$ towards another random variable $Z$. They are denoted by $\piaBW{\mathbf{R}}(Z;\alpha)$ where $\alpha$ is a set of subsets of the base set of random variables $\mathbf{R}$. As this construction is possibly with any redundancy measure, we will use $\piaBW{\mathbf{R}}(Z;\alpha)$ denoting the PI-atoms based on $\Imin$ as a redundancy measure and thereby staying consistent in the notation with \cite{Williams2010}. The primed version $\pia{\mathbf{R}}(Z;\alpha)$ on the other hand will denote the decomposition using the redundancy measure $\Ired$ introduced here.

In the bivariate case, this leads to the decomposition of mutual information $I(Z;X,Y)$ into four partial information atoms. Here we have $\mathbf{R}=\{X,Y\}$. Now, following \cite{Williams2010} there are four atomic terms,
\begin{itemize}
   \item $\pia{\mathbf{R}}(Z;\{X\}\{Y\})$ which is the redundant information contained in $X$ and $Y$ about $Z$,
   \item $\pia{\mathbf{R}}(Z;\{X\})$ and $\pia{\mathbf{R}}(Z;\{Y\})$ are the unique information about $Z$, which is only contained in $X$ or $Y$ respectively,
   \item and $\pia{\mathbf{R}}(Z;\{X,Y\})$, synergistic information, the information about $Z$ that is only available if $X$ and $Y$ are both known.
 \end{itemize}
The sum of these terms is exactly the mutual information between $Z$ and all sources, i.e. \begin{eqnarray}I(Z;X,Y)&=&\pia{\mathbf{R}}(Z;\{X\}\{Y\})+\pia{\mathbf{R}}(Z;\{X\})\nonumber \\ & & + \pia{\mathbf{R}}(Z;\{Y\})+\pia{\mathbf{R}}(Z;\{X,Y\}). \label{eq:pia_decomp}\end{eqnarray} as well as
\begin{equation}I(Z;X)=\pia{\mathbf{R}}(Z;\{X\}\{Y\})+\pia{\mathbf{R}}(Z;\{X\}) \label{eq:pia_mono_decomp}\end{equation}
and for $Y$ respectively. Still following \cite{Williams2010}, but having replaced $\Imin$ by $\Ired$ we get $\pia{\mathbf{R}}(Z;\{X\}\{Y\})=\Ired(Z;X,Y)$ and $\pia{\mathbf{R}}(Z;\{X\})=I(Z;X)-\Ired(Z;X,Y)$. Finally, for the synergistic term \begin{eqnarray}\pia{\mathbf{R}}(Z;\{X,Y\})&=&I(Z;X,Y)-\pia{\mathbf{R}}(Z;\{X\}) \nonumber \\* & & -\pia{\mathbf{R}}(Z;\{Y\})  \nonumber \\* & &  -\pia{\mathbf{R}}(Z;\{X\}\{Y\})\\&=&I(Z;X,Y)-I(Z;X)  \nonumber \\* & &  -I(Z;Y)+\Ired(Z;X,Y). \label{eq:synergy_def}\end{eqnarray}
Now this decomposition is not non-negative by default and this needs to be shown for the specific redundancy measure used. It is shown by Williams in \cite{Williams2011a} for the decomposition using $\Imin$. Here, we will show it for the bivariate case with $\Ired$ as redundancy measure: Firstly, $\Ired(Z;X,Y)$ is non-negative, as shown earlier, furthermore it follows from the axioms of the redundancy measure that $\Ired(Z;X,Y)\leq I(X;Z)$ and with the same argument $\Ired(Z;X,Y)\leq I(Y;Z)$ which immediately implies that the unique information terms are non-negative. The following lemma now gives the non-negativity of the synergistic term:

\begin{widetext}
\begin{lemma}
  $I(Z;X,Y)-I(Z;X)-I(Z;Y)+\Ipr{X}{Y}{Z} \geq 0$
\end{lemma}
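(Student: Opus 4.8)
The plan is to reduce the claim to a clean comparison between a conditional mutual information and a sum of Kullback--Leibler divergences, and then to dispatch that comparison using the optimality of the projection together with the joint convexity of relative entropy. First I would rewrite the left-hand side using two facts already available: the KL-difference form of projected information, which gives $\Ipr{X}{Y}{Z} = I(Z;X) - \sum_x p(x)\DivKL{p(z|x)}{\qpr{x}{Y}(z)}$, and the chain rule $I(Z;X,Y) = I(Z;Y) + I(Z;X\mid Y)$. Substituting both, the $I(Z;X)$ and $I(Z;Y)$ terms cancel, and the lemma becomes equivalent to the single inequality
\begin{equation*}
  I(Z;X\mid Y) \ \geq\ \sum_x p(x)\,\DivKL{p(z|x)}{\qpr{x}{Y}(z)}.
\end{equation*}

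Next I would exploit that $\qpr{x}{Y}$ is, by definition, the element of $\Ccl(\langle Y\rangle_Z)$ minimizing the divergence from $p(\cdot|x)$. The key step is to produce a concrete, convenient competitor in this convex set: the mixture $q^\star_x(z) := \sum_y p(y|x)\,p(z|y)$, which lies in $\Ccl(\langle Y\rangle_Z)$ because its weights $p(y|x)$ are non-negative and sum to one. Optimality of the projection then yields $\DivKL{p(z|x)}{\qpr{x}{Y}(z)} \leq \DivKL{p(z|x)}{q^\star_x}$ for every $x$, so it suffices to bound $\sum_x p(x)\DivKL{p(z|x)}{q^\star_x}$ above by $I(Z;X\mid Y)$.

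Finally I would observe that both arguments of this divergence are mixtures over $y$ with the \emph{same} weights $p(y|x)$: indeed $p(z|x) = \sum_y p(y|x)\,p(z|x,y)$ and $q^\star_x(z) = \sum_y p(y|x)\,p(z|y)$. Joint convexity of the Kullback--Leibler divergence therefore gives $\DivKL{p(z|x)}{q^\star_x} \leq \sum_y p(y|x)\,\DivKL{p(z|x,y)}{p(z|y)}$, and averaging over $x$ with weight $p(x)$ collapses the right-hand side to exactly $\sum_{x,y} p(x,y)\DivKL{p(z|x,y)}{p(z|y)} = I(Z;X\mid Y)$, which closes the chain of inequalities.

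The only genuinely creative step is choosing the comparison point $q^\star_x$; everything downstream is a mechanical application of projection optimality and convexity. The main obstacle is thus recognizing that the mixing weights must be taken to be $p(y|x)$ precisely so that the convexity bound telescopes into the conditional mutual information---any other weighting leaves a residual term that need not vanish. Since $\Ired(Z;X,Y)$ is the minimum of $\Ipr{X}{Y}{Z}$ and $\Ipr{Y}{X}{Z}$, establishing the inequality for $\Ipr{X}{Y}{Z}$ as above immediately delivers the non-negativity of the synergistic atom.
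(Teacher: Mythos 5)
Your proposal is correct and is essentially the paper's own argument: both reduce the claim to comparing $I(Z;X\mid Y)$ with $\sum_x p(x)\DivKL{p(z|x)}{\qpr{x}{Y}(z)}$, both use the mixture $\sum_y p(y|x)p(z|y)\in\Ccl(\langle Y\rangle_Z)$ as the comparison point, and both combine joint convexity of the Kullback--Leibler divergence (with weights $p(y|x)$) with the optimality of the projection. The only difference is the order in which the two inequalities are applied, which is immaterial.
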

\begin{proof}
  We can reformulate the left hand side
  \begin{eqnarray}
    & & I(Z;X,Y)-I(Z;X) -I(Z;Y)+\Ipr{X}{Y}{Z} \\
    &=& I(Z;X,Y)-I(Z;Y) - \sum_x p(x) \DivKL{p(z|x)}{\qpr{x}{Y}(z)} \\
    &=& \sum_{x,y} p(x,y) \DivKL{p(z|x,y)}{p(z|y)} - \sum_x p(x) \DivKL{p(z|x)}{\qpr{x}{Y}(z)}\\
    &=& \sum_{x} p(x) \left( \left(\sum_{y} p(y|x) \DivKL{p(z|x,y)}{p(z|y)} \right)- \DivKL{p(z|x)}{\qpr{x}{Y}(z)} \right) \label{eqn:before_convex}
  \end{eqnarray}
  and now by the convexity of the Kullback-Leibler divergence:
  \begin{eqnarray}
    &\geq& \sum_{x} p(x) \left( \DivKL{\sum_{y} p(y|x) p(z|x,y)}{\sum_{y} p(y|x)p(z|y)}- \DivKL{p(z|x)}{\qpr{x}{Y}(z)} \right)\\
    &=& \sum_{x} p(x) \left( \DivKL{p(z|x)}{r(z|x)}- \DivKL{p(z|x)}{\qpr{x}{Y}(z)} \right)
  \end{eqnarray}
  where $r(z|x):=\sum_{y} p(y|x)p(z|y) \in \Ccl(\langle Y \rangle_Z)$ and thus
  \begin{equation}\DivKL{p(z|x)}{r(z|x)}- \DivKL{p(z|x)}{\qpr{x}{Y}(z)} \geq 0 \mbox{ for all } x \in \mathcal{X}.\end{equation}
\end{proof}
\end{widetext}

Given the non-negativity of the decomposition, we can visualize it using a PI-diagram as seen in FIG.~\ref{fig:decomp_example}. The whole circle represents the mutual information $I(Z;X,Y)$ and the colored/shaded regions represent redundant (yellow/light shaded), unique (red/dark shaded) and synergistic (blue/medium shaded) information.

\subsection{Examples} \label{sec:examples}

We will now go through some examples for the bivariate measure, in particular those discussed in \cite{Griffith2011}, which are a good selection of test cases for the desired properties of a redundancy/synergy measure.

\subsubsection{Copying - From Redundancy to Uniqueness}
\label{sec:ex_copy}
\begin{figure}[b]
\subfloat[Bayesian model]{\label{fig:lambdamodel}
  \begin{tikzpicture}[scale=1.2]
  \GraphInit[vstyle=Dijkstra]
  \SetVertexMath \renewcommand*{\VertexLineColor}{white}
  \Vertex[L=W,Lpos=90]{W}
  \NOEA[L=X,Lpos=90](W){X}
  \SOEA[L=Y,Lpos=-90](W){Y}
  \SOEA[L=Z,Lpos=-90](X){Z}
  \SetUpEdge[style={post}]
  \tikzstyle{EdgeStyle}=[line width=.5pt]
  \tikzstyle{LabelStyle}=[left=10pt]
  \tikzstyle{LabelStyle}=[above=3pt]
  \Edge[label=\scriptsize$\lambda$,labelstyle={left=8pt,below=4pt}](W)(X)
  \Edge[label=\scriptsize$\lambda$,labelstyle={left=7pt,below=18pt}](W)(Y)
  \Edge(X)(Z)
  \Edge(Y)(Z)
\end{tikzpicture}
}\hskip0.5cm
\subfloat[PI-diagram for $\lambda=0$, complete redundancy (\textsc{Rdn})]{\label{fig:lambda0}
  \begin{tikzpicture}[scale=0.5]
    \bidecomphigh{1}{0}{0}{0}{70}{20}{20}{10}
  \end{tikzpicture}}\hskip0.5cm
   \subfloat[PI-diagram for $\lambda=1$, complete uniqueness (\textsc{Unq})]{\label{fig:lambda1}
  \begin{tikzpicture}[scale=0.5]
    \bidecomphigh{0}{1}{1}{0}{5}{70}{70}{10}
  \end{tikzpicture}}
  \caption{Copy Example. Complete redundancy and complete uniqueness using $\Ired$.}
  \label{fig:lambda_compare}
\end{figure}
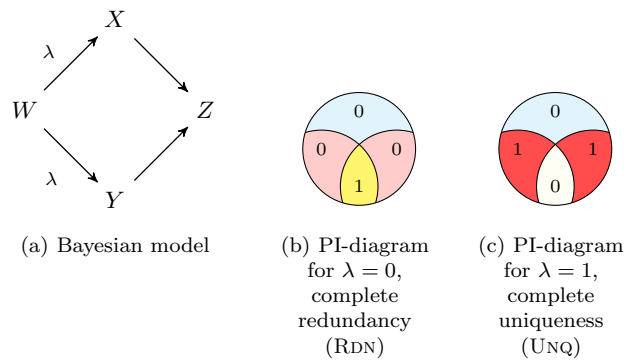
Our first example is a very simple mechanism which simply copies the binary input variables $X$ and $Y$ into $Z$, i.e. $Z=(X,Y)$. However, we also add a control paremeter $\lambda\in[0,1]$ which determines how correlated $X$ and $Y$ are, as follows: Let $W$ be a uniformly distributed binary random variable, $p(x|w) = \lambda \frac{1}{2} + (1-\lambda)\delta_{xw}$ and $p(y|w) = \lambda \frac{1}{2} + (1-\lambda)\delta_{yw}$. For $\lambda=1$ we have that $X$ and $Y$ are independent and we recover the example ``\textsc{Unq} (Unique Information)'' from \cite{Griffith2011}. On the other extreme $\lambda=0$ we have that $X$ and $Y$ are identical copies of $W$ and therefore $Z$ is equivalent to $W$ from an information theoretic point of view. This is also reflected in the decomposition as in this case $I(Z;X,Y)=I(W;X,Y)$ and $\Ired(Z;X,Y)=\Ired(W;X,Y)$, so we can see that this is the example ``\textsc{Rdn} (Redundant Information)'' from \cite{Griffith2011}. By varying $\lambda$ we can vary the entropy of the outcome $Z$ and at the same time exchange unique information for redundancy. FIG.~\ref{fig:lambda_compare} illustrates the decomposition at both extremal values of $\lambda$ and it can be seen that the resulting values of $\Ired$ coincide with the proposed values in \cite{Griffith2011}. The effect of changing $\lambda$ is shown in FIG.~\ref{fig:lambda_plot}.

\begin{figure}[t]
    \centering
\begin{tikzpicture}
   \begin{axis}[
     xmin=0, xmax=1,
     ymin=0, ymax=2,
     articleplot,
      xlabel={$\lambda$},
      ylabel={bits},
      width=8cm,
      height=6cm]
     \addplot[no marks, line width=1pt, darkgray] table[x index=0, y index=4] {data/lambda.dat}; \label{plot:lambda_red} 
     \addplot[no marks, gray, dash pattern=on 6pt off 2pt] table[x index=0, y index=3] {data/lambda.dat}; \label{plot:lambda_mi}
     \addplot[no marks, line width=1pt, darkgray, dash pattern=on 2pt off 2pt] table[x index=0, y index=8] {data/lambda.dat}; \label{plot:lambda_imin}
   \end{axis}
\end{tikzpicture}
  \caption{Comparison of total mutual information I(Z;X,Y) (\ref{plot:lambda_mi}), our redundancy measure $\Ired$ (\ref{plot:lambda_red}) and $\Imin$ (\ref{plot:lambda_imin}) for varying values of $\lambda$, where $\lambda$ controls the correllation between $X$ and $Y$. It can be seen $\Imin$ measures a constant amount of redundancy and therefore does not distinguish between redundancy and uniqueness with varying $\lambda$ as desired, whereas $\Ired$ does.}
  \label{fig:lambda_plot}
\end{figure}
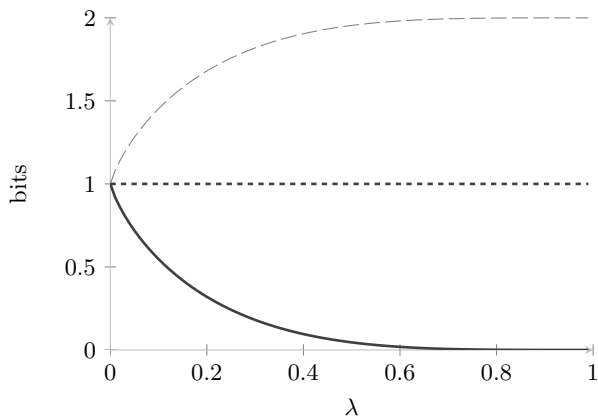

\subsubsection{XOR}

The \textsc{Xor} gate ($\oplus$), is a classical example for the appearance of synergy, in the sense of the whole being more than the sum of the individuals.  We expect to only observe synergistic information, as the result is only known if both inputs are available, and the uncertainty given one input is the same as giving no input at all. Again the inputs are uniformly distributed binary random variables and $Z=X \oplus Y$. In fact, in this case we have $\Ired(Z;X,Y)=\Imin(Z;X,Y)=0$ and get the purely synergistic decomposition as illustrated in FIG.~\ref{fig:xor_example}. Note that $\Ired$ defines the redundancy, other terms are all derived by the decomposition.

\subsubsection{AND - Mechanisms at Work}

\label{sec:and_example}
We now come to the \textsc{And} gate, $Z=X\land Y$. This turns out to be an interesting case, because it demonstrates the subtle difference between redundant information that is due to the ``ignorance'' of the mechanism with respect to the source, and redundancy that is already apparent in the sources. In \cite{Griffith2011, Griffith2012} it is argued that vanishing mutual information between the sources $X$ and $Y$ themselves implies vanishing redundant information\footnote{``However, because $X_1$ and $X_2$ are independent, [...], thus necessitating there is zero redundant information [...].'',\cite{Griffith2011}}. This feature is also shared by the synergy measure introduced in \cite{Griffith2012}. However, here we would like to embrace a different view on redundant information: even if the sources are independent, there can be a correlation in the change of the distribution over $Z$ given observations in $X$ and $Y$ respectively. Observing one input does not give any information about the other input, but part of the information gain about the distribution of the output can be the same as one gets from the other input alone. In particular in the case of the \textsc{And} gate, observing a 0 in either input leads to $p(z=0)=1$. As a result of calculating the redundancy for this example we get $\Ired(Z;X,Y)=\Imin(Z;X,Y)=0.311278$, so this is another example where minimal and redundant information coincide. FIG.~\ref{fig:and_example} illustrates the decomposition of the total mutual information for this example.

We denote redundant information that is only due to the mechanism, as it is the case here, \textit{mechanistic redundancy}. Contrary to this we call redundant information that already appears in the inputs \textit{source redundancy}. Redundancy in the source must already manifest itself in the mutual information between the inputs. We do not give a rigorous definition for these terms, as it can be seen in the next example, there are cases where it is not clear how to separate both. However, if there is positive redundant information $\Ired > 0$ but vanishing mutual information between the sources, we will attribute all redundant information to mechanistic redundancy.

\begin{figure}
\subfloat[PI-diagram]{\label{fig:xor_diag}
  \begin{tikzpicture}[scale=0.8]
    \bidecomphigh{0}{0}{0}{1}{5}{20}{20}{50}
  \end{tikzpicture}}\hskip0.25cm
   \subfloat[circuit diagram]{\label{fig:cor_circuit}
    \begin{tikzpicture}[circuit logic US, minimum height=10mm,ampersand replacement=\&]
\matrix[column sep=10mm]
{
\node (x) {$X$}; \& \&  \\
\& \node [xor gate] (xor1) {\scriptsize\,\,XOR}; \& \node (z) {$Z$};\\
\node (y) {$Y$}; \& \& \\
};
\draw[*-] (x.east) -- ++(right:3mm) |- (xor1.input 1);
\draw[*-] (y.east) -- ++(right:3mm) |- (xor1.input 2);
\draw[-*] (xor1.output) -- ++(right:3mm) |- (z.west);
\end{tikzpicture}}
\vskip0.125cm
  \caption{\textsc{Xor} Example. A purely synergistic mechanism.}
  \label{fig:xor_example}
\end{figure}
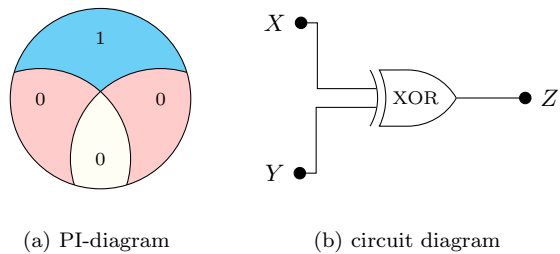
\begin{figure}[t]
\subfloat[PI-diagram]{\label{fig:and_diag}
  \begin{tikzpicture}[scale=0.8]
    \bidecomphigh{0.311}{0}{0}{0.5}{50}{20}{20}{50}
  \end{tikzpicture}}\hskip0.25cm
   \subfloat[circuit diagram]{\label{fig:cor_andircuit}
    \begin{tikzpicture}[circuit logic US, minimum height=10mm,ampersand replacement=\&]
\matrix[column sep=10mm]
{
\node (x) {$X$}; \& \&  \\
\& \node [and gate] (and1) {\scriptsize AND}; \& \node (z) {$Z$};\\
\node (y) {$Y$}; \& \& \\
};
\draw[*-] (x.east) -- ++(right:3mm) |- (and1.input 1);
\draw[*-] (y.east) -- ++(right:3mm) |- (and1.input 2);
\draw[-*] (and1.output) -- ++(right:3mm) |- (z.west);
\end{tikzpicture}}
\vskip0.12cm
  \caption{\textsc{And} Example. The total mutual information is $I(Z;X,Y)=0.811278$.}
  \label{fig:and_example}
\end{figure}
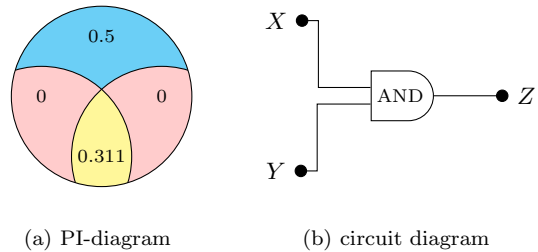

\subsubsection{Summing Dice}
Let us now consider an example where we throw two dice (cubic dice, with numbered sides from 0 to 5), represented by the random variables $D_1$, $D_2$ and sum their results. There are several ways to sum the results, we could simply add the two results --- this would lead to results ranging from 0 to 10 where 5 is the most probable result and 0 or 10 the least probable results --- or we multiply the result of the first die by 6 to get a uniform distribution of all numbers ranging from 0 to 35. Indeed, we will also look at all intermediate summations defined by $R=\alpha D_1 + D_2$ where $\alpha \in \{1,2,3,4,5,6\}$. Our hypothesis was that for the direct summation ($\alpha=1$) there is a positive amount of redundancy between $D_1$ and $D_2$ with respect to $R$, because knowing the roll of one die gives ``overlapping'' information (in the same direction in the space of distributions) with the roll of the other die about the final result. The redundancy should then decrease if $\alpha$ is increased, up to the point where $\alpha=6$ and the sum of both dice rolls is isomorphic to the joint variable of the two dice rolls, i.e. $6D_1 + D_2 \simeq (D_1,D_2)$. Indeed, this is reflected in the redundancy $\Ired(R;D_1,D_2)$. In FIG.~\ref{fig:dice} we added an additional parameter $\lambda$ that controls how correlated the two dice are, in the same way as $\lambda$ was introduced in the copy example in Section~\ref{sec:ex_copy} to control the correlation between the input variables. For $\lambda=1$ they are independent and it can be seen that the redundancy increases with decreasing $\alpha$, on the other extreme $\lambda=0$ the dice are completely correlated. In this case we can see that the redundancy is already existent in the source ($I(D_1,D_2)\approx 2.58$) shadows all redundancy otherwise induced through the mechanism and hence there is no difference in the redundancy value for all values of $\alpha$.

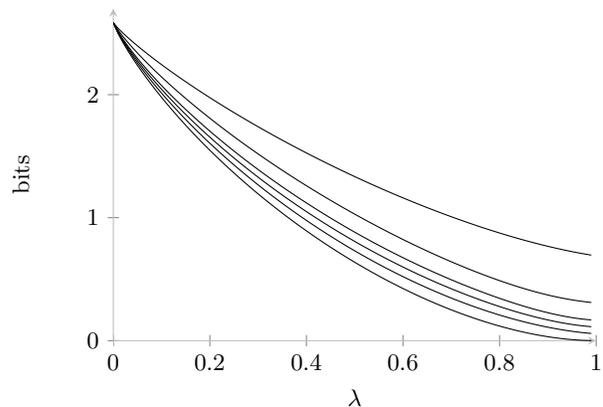
\begin{figure}
  \centering
\begin{tikzpicture}
   \begin{axis}[
     articleplot,
      xlabel={$\lambda$},
      ylabel={bits},
      width=8cm,
      height=6cm,
      xmin=0, xmax=1,
      ymin=0, ymax=2.7,
      no marks]
      \addplot[black] table[x index=0, y index=5] {data/dice1.dat}; \label{plot:dice1}
      \addplot[black] table[x index=0, y index=5] {data/dice2.dat}; \label{plot:dice2}
      \addplot[black] table[x index=0, y index=5] {data/dice3.dat}; \label{plot:dice3}
      \addplot[black] table[x index=0, y index=5] {data/dice4.dat}; \label{plot:dice4}
      \addplot[black] table[x index=0, y index=5] {data/dice5.dat}; \label{plot:dice5}
      \addplot[black] table[x index=0, y index=5] {data/dice6.dat}; \label{plot:dice6}
   \end{axis}
\end{tikzpicture}
  \caption{Plot of the redundant information $\Ired(R;D_1,D_2)$ depending on the correlation $\lambda$ between the two dice $D_1$ and $D_2$. From top to bottom the summation coefficient is $\alpha = 1,...,6$. It can be seen that for independent dice $\lambda=1$ the amount of redundancy depends on the mechanism that is used to sum the results, whereas on the other extreme, all redundancy comes from the correlation of the sources.}
  \label{fig:dice}
\end{figure}

\subsubsection{Composition of Mechanisms}

The last three examples from \cite{Griffith2011} are compositions of the already shown examples. The first one \textsc{RdnXor} combines the redundant copy example ($\lambda=0$) with an \textsc{Xor} gate: $(X,W)$ and $(Y,W)$ are the inputs and $Z=(W,X\oplus Y)$ is the output. With our redundancy measure, this results in the required composite of one bit of redundant and one bit of synergistic information, the same as measured with $\Imin$.

The second example \textsc{RdnUnqXor}, combines an \textsc{Xor} gate with the two extremal copy cases. The inputs are $(X_1,X_2,W)$ and $(Y_1,Y_2,W)$, all independent and uniformly distributed. The output is $Z=(X_1 \oplus Y_1, (X_2,Y_2),W)$. Here we get the intended 1 bit of information in every partial information term, i.e. 1 bit of redundant, 1 bit synergistic information and 1 bit unique information per input, and a total 4 bits of mutual information.

The third example \textsc{XorAnd}, combines an \textsc{Xor} gate with an \textsc{And} gate, i.e. $Z=(X\land Y, X\oplus Y)$. This obviously leads to a different result than in \cite{Griffith2011}, as the same effect of mechanistic redundancy appears in the \textsc{And} gate, as mentioned in Section~\ref{sec:and_example}.

\subsubsection{Summary}

\begin{table}[b]
\begin{ruledtabular}

  \begin{tabular}{@{}l@{\hspace{1cm}}lll@{}}
Example & Expected & $\Ired$ & $\Imin$ \\
\addlinespace[5pt]
Copy ($\lambda=0$) / \textsc{Rdn} & 1 & 1 & 1 \\
Copy ($\lambda=1$) / \textsc{Unq} & 0 & 0 & \color{red} 1 \\
\textsc{Xor} & 0 & 0 & 0 \\
\textsc{And} & 0.311 & 0.311 & 0.311 \\
\addlinespace[5pt]
\textsc{RdnXor} & 1 & 1 & 1 \\
\textsc{RdnUnqXor} & 1 & 1 & \color{red} 2 \\
\textsc{XorAnd} & 0.5 & 0.5 & 0.5 \\
\addlinespace[5pt]
Copy ($\lambda<1$) & I(X;Y) & I(X;Y) & \color{red} 1 \\
\end{tabular}
\end{ruledtabular}
  \caption{Summary of the bivariate redundancy examples. Results for the calculations of the examples using $\Ired$ and $\Imin$, as well as the expected value that results from considarations of the desired properties of a redundancy measure, cf. \cite{Griffith2011}. }
  \label{tab:decomp}
\end{table}

In summary, these examples show that $\Ired$ captures proposed the concept of redundancy very well. Furthermore the resulting decomposition is in agreement with the desired examples in \cite{Griffith2011} except for the case where what we call mechanistic redundancy appears, which was not accounted for in the comparison of current measures of synergy. TABLE~\ref{tab:decomp} summarises the comparison of $\Imin$ and $\Ired$.

\subsection{Information Transfer}

In \cite{Williams2011} the partial information decomposition is used to introduce new measures of information transfer. The measures are based on a decomposition of transfer entropy. Transfer entropy, introduced by Schreiber \cite{schreiber2000}, is defined for two random processes $X_t$ and $Y_t$ as
\begin{equation}
  T_{Y\to X} = I(X_{t+1};Y_{t}|X_{t}).
\end{equation}
It measures the influence of the process $Y$ at time t on the state of the process $X$ in the next time step. One can also take a longer history instead of $Y_{t}$ and $X_{t}$ into account. Conditional mutual information is defined as
\begin{equation}
  I(X_{t+1};Y_{t}|X_{t}) = I(X_{t+1};Y_{t},X_{t}) - I(X_{t+1};X_{t}).
\end{equation}
As the conditional entropy is the difference of two mutual information terms, the PI-decomposition can be used to decompose transfer entropy into two non-negative components. The decomposition is illustrated in FIG.~\ref{fig:te_decomp}. Let $\mathbf{R}=\{X_{t},Y_{t}\}$ then it follows from (\ref{eq:pia_decomp}) and (\ref{eq:pia_mono_decomp}) that
\begin{equation}
  T_{Y\to X} = \pia{\mathbf{R}}(X_{t+1};\{Y_t\}) + \pia{\mathbf{R}}(X_{t+1};\{X_t,Y_t\}).
\end{equation}
The first term denotes all information that uniquely comes from $Y_t$, called {\em State Independent Transfer Entropy} (SITE) by Williams and Beer \cite{Williams2011}. The second term on the other hand denotes information that comes from $Y_t$ but depends on the state of $X_t$ and thus is called {\em State Dependent Transfer Entropy} (SDTE) in \cite{Williams2011}. We now apply both measures $\Imin$ (with corresponding PI-atoms $\piaBW{\mathbf{R}}$) and $\Ired$ (with corresponding PI-atoms $\pia{\mathbf{R}}$) as the underlying redundancy measure for the decomposition and compare the results.

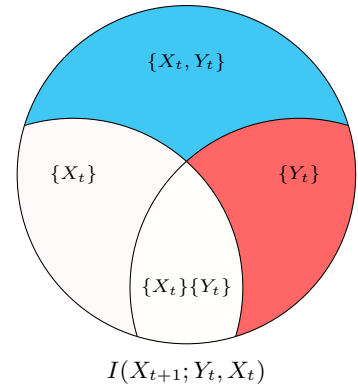
\begin{figure}[t!]
  \begin{tikzpicture}[scale=1.5]
    \bidecomphigh{$\{X_t\}\{Y_t\}$}{$\{X_t\}$}{$\{Y_t\}$}{$\{X_t,Y_t\}$}{2}{2}{60}{60}
    \node at (0,-1.75cm) {$I(X_{t+1};Y_{t},X_{t})$};
  \end{tikzpicture}
  \caption{PI-diagram for the decomposition of transfer entropy into PI-atoms. The coloured areas denote the transfer entropy.}
  \label{fig:te_decomp}
\end{figure}

We will consider two examples to show the difference of the decomposition when using $\Ired$ instead of $\Imin$. The first one revisits an example from \cite{Williams2011} where $X$ and $Y$ are two binary, coupled Markov random processes. The process $Y$ is uniformly i.i.d. and $x_{t+1}=y_t$ if $x_t=0$, moreover 
\begin{eqnarray}
  p(x_{t+1}=y_t|x_t=1)&=&1-d, \\
  p(x_{t+1}=1-y_t|x_t=1)&=&d.
\end{eqnarray} So $d\in [0,1]$ controls whether there is any dependence on the previous state of $X$. If $d$ vanishes $X$ is simply a copy of $Y$. For this example and $d=0$ shows only state-independent transfer while $d=1$ shows only state dependent transfer and most importantly the decompositions of transfer entropy using either measure ($\Ired,\Imin$) coincide (compare with FIG.~\ref{fig:binary_markov}).

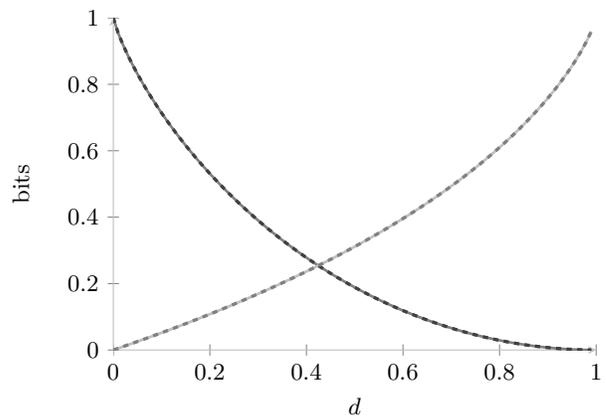
\begin{figure}
  \centering
\begin{tikzpicture}
   \begin{axis}[
     articleplot,
      xlabel={$d$},
      ylabel={bits},
      width=8cm,
      height=6cm,
      xmin=0, xmax=1,
      ymin=0, ymax=1]
      \addplot[no marks, line width=1pt, gray] table[x index=0, y index=6] {data/binary_markov.dat}; \label{plot:bm_site_r}
      \addplot[no marks, line width=1pt, lightgray] table[x index=0, y index=7] {data/binary_markov.dat}; \label{plot:bm_sdte_r}
      \addplot[no marks, line width=1.25pt, darkgray,  dash pattern=on 2pt off 2pt] table[x index=0, y index=10] {data/binary_markov.dat}; \label{plot:bm_site}
      \addplot[no marks, line width=1.25pt, gray,  dash pattern=on 2pt off 2pt] table[x index=0, y index=11] {data/binary_markov.dat}; \label{plot:bm_sdte}

   \end{axis}
\end{tikzpicture}
  \caption{Decomposition of transfer entropy $T_{Y \to X}$ for the first example process. The plot shows SITE (\ref{plot:bm_site} using $\Imin$, \ref{plot:bm_site_r} using $\Ired$) and SDTE (\ref{plot:bm_sdte} using $\Imin$, \ref{plot:bm_sdte_r} using $\Ired$) given $d$. It can be seen that both decompositions coincide for this process.}
  \label{fig:binary_markov}
\end{figure}

The second example, though constructed for this specific purpose, is more intricate. First of all it shows the difference between the two measures, but it is also a good example of the subtlety of redundancy in mechanisms. Let us consider the following two processes $(X_t,Y_t)$ and $Z_t$ where $Z_t$ are uniformly i.i.d. random variables, $X_{t+1}$ is a copy of $X_t$ and \begin{equation}p(y_{t+1}|y_t,z_t) = (1-d) \delta_{y_t y_{t+1}} + d \delta_{z_t y_{t+1}}.\end{equation}
The process $Y_t$, copies with probability $d$ the value of $Z_{t-1}$ and with probability $(1-d)$ the value of $Y_{t-1}$. We now measure the transfer entropy $T_{Z \to (X,Y)}$, see FIG.~\ref{fig:proc2_bayesian} for a Bayesian network of the process.

\begin{figure}

  \begin{tikzpicture}[scale=1.8]
  \GraphInit[vstyle=Dijkstra]
  \SetVertexMath \renewcommand*{\VertexLineColor}{white}
  \Vertex[L=X_t,Lpos=90]{Xt}
  \SO[L=X_{t+1},Lpos=90](Xt){Xtp}
  \EA[L=Y_t,Lpos=-90](Xt){Yt}
  \SO[L=Y_{t+1},Lpos=90](Yt){Ytp}
  \EA[L=Z_t,Lpos=-90](Yt){Zt}
  \SO[L=Z_{t+1},Lpos=90](Zt){Ztp}
  \SetUpEdge[style={post}]
  \tikzstyle{EdgeStyle}=[line width=.5pt]
  \tikzstyle{LabelStyle}=[left=10pt]
  \tikzstyle{LabelStyle}=[above=3pt]
  \Edge[label=\scriptsize$(1-d)$,labelstyle={left=18pt,below=10pt}](Yt)(Ytp)
  \Edge[label=\scriptsize$d$,labelstyle={right=16pt,below=7pt}](Zt)(Ytp)
  \Edge(Xt)(Xtp)
\end{tikzpicture}
  \caption{Bayesian network of the second example process. $X_t$ is a parallel and independent process, the only information transfer between the processes is from $Z_t$ to $Y_{t+1}$. }
  \label{fig:proc2_bayesian}
\end{figure}
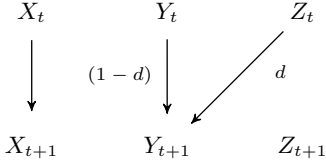

It can be seen in FIG.~\ref{fig:binary_markov2} that the two decompositions coincide for $d\leq 0.5$. For $d=0$ the two processes are completely independent which is reflected in the vanishing overall transfer entropy in this case. On the other extreme using $d=1$, the decomposition using $\Ired$ gives complete state-independent transfer entropy while the decomposition using $\Imin$ sees total state-dependent transfer entropy. In this case the decompositions disagree completely and we argue that our measure reflects the process much better. With $d=1$ the process always copies $Z_t$ to $Y_{t+1}$, which is completely independent of $(X_t,Y_t)$. Specifically, $\Imin$ mistakenly sees redundancy between $X_t$ and $Z_t$ in the evolution of one timestep. Following (\ref{eq:synergy_def}) and (\ref{eq:pia_mono_decomp}) this is then reflected in the vanishing state-independent transfer entropy for all $d$ (larger redundancy means more synergy and less unique information, given that the mutual information stays constant).

The fact that $\Imin$ measures more redundancy has the same reason why $\Imin$ measures redundancy between independent $X$ and $Y$ with respect to $Z=(X,Y)$, namely it compares changes in different direction in the space of distributions. The parallel and independent process $X_t$ lets $\Imin$ see a dependency between the two processes $X_t$ and $Z_t$ that does not exist. If we consider the transfer entropy $T_{Z \to Y}$ from $Z_t$ to $Y_t$ only, ignoring the process $X_t$ completely, we can see in FIG.~\ref{fig:binary_markov3} that the decomposition (\ref{plot:bm_site3},\ref{plot:bm_sdte3}) now coincides with the decomposition of $T_{Z \to (X,Y)}$ using $\Ired$ (\ref{plot:bm_site_r2},\ref{plot:bm_sdte_r2} in FIG. \ref{fig:binary_markov2}).

Nonetheless, we have not yet explained the quite unusual non-differentiable shape of the state-independent transfer entropy, which only is positive for $d>0.5$. This is surprising because up to $d=0.5$ all transfer entropy is considered to be state-dependent, even though with probability $d$ the state of $Y_{t+1}$ takes on the state of $Z_t$. As the process $X_t$ was only used to demonstrate that using $\Imin$ for the decomposition measures state dependencies in the transfer-entropy that are not there, we will now leave $X_t$ aside and only consider the process $(Y_t,Z_t)$ as described above.

To understand the shape of the graph of state-dependent transfer entropy of this process, we need to have a look at the mutual information $I((Y_{t+1});Z_t)$ (\ref{plot:bm_mi} in FIG.~\ref{fig:binary_markov2b}) and the redundancy $\Ired(Y_{t+1};Y_t,Z_t)$ (\ref{plot:bm_red} in FIG.~\ref{fig:binary_markov2b}). From (\ref{eq:pia_mono_decomp}) it follows that the state-independent transfer entropy (\ref{plot:bm_site_r2} in FIG.~\ref{fig:binary_markov2} and \ref{plot:bm_site3} in FIG.~\ref{fig:binary_markov3}) is now the difference of these two terms (compare with FIG.~\ref{fig:te_decomp}).

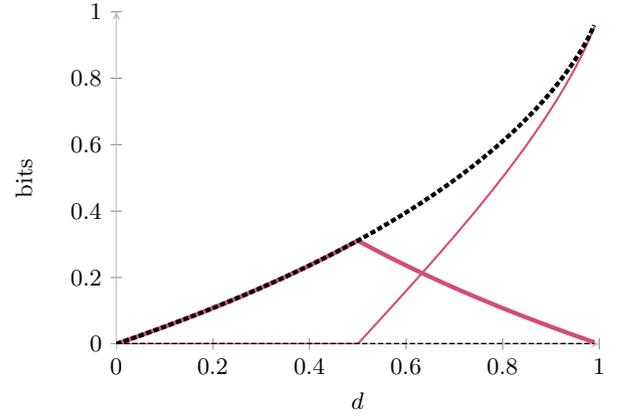
\begin{figure}
  \centering
\begin{tikzpicture}
   \begin{axis}[
     articleplot,
      xlabel={$d$},
      ylabel={bits},
      width=8cm,
      height=6cm,
      xmin=0, xmax=1,
      ymin=0, ymax=1]
      \addplot[no marks, line width=0.8pt, white!30!purple] table[x index=0, y index=6] {data/markov_2.dat}; \label{plot:bm_site_r2}
      \addplot[no marks, line width=1.65pt, white!30!purple] table[x index=0, y index=7] {data/markov_2.dat}; \label{plot:bm_sdte_r2}
      \addplot[no marks, line width=0.8pt, black,  dash pattern=on 2pt off 1pt] table[x index=0, y index=10] {data/markov_2.dat}; \label{plot:bm_site2}
      \addplot[no marks, line width=1.65pt, black,  dash pattern=on 2pt off 1pt] table[x index=0, y index=11] {data/markov_2.dat}; \label{plot:bm_sdte2}
   \end{axis}
\end{tikzpicture}
  \caption{Decomposition of transfer entropy $T_{Z \to (X,Y)}$ for the second example process. The plot shows SITE (\ref{plot:bm_site2} using $\Imin$, \ref{plot:bm_site_r2} using $\Ired$) and SDTE (\ref{plot:bm_sdte2} using $\Imin$, \ref{plot:bm_sdte_r2} using $\Ired$).}
  \label{fig:binary_markov2}
\end{figure}

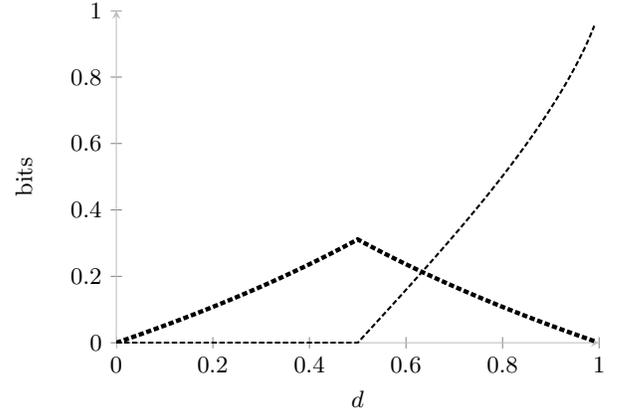
\begin{figure}
  \centering
\begin{tikzpicture}
   \begin{axis}[
     articleplot,
      xlabel={$d$},
      ylabel={bits},
      width=8cm,
      height=6cm,
      xmin=0, xmax=1,
      ymin=0, ymax=1]

      \addplot[no marks, line width=0.8pt, black,  dash pattern=on 2pt off 1pt] table[x index=0, y index=10] {data/markov_3.dat}; \label{plot:bm_site3}
      \addplot[no marks, line width=1.65pt, black,  dash pattern=on 2pt off 1pt] table[x index=0, y index=11] {data/markov_3.dat}; \label{plot:bm_sdte3}
   \end{axis}
\end{tikzpicture}
  \caption{Decomposition of transfer entropy $T_{Z \to Y}$ for the second example process. The plot shows SITE (\ref{plot:bm_site3} using $\Imin$), SDTE (\ref{plot:bm_sdte3} using $\Imin$).}
  \label{fig:binary_markov3}
\end{figure}

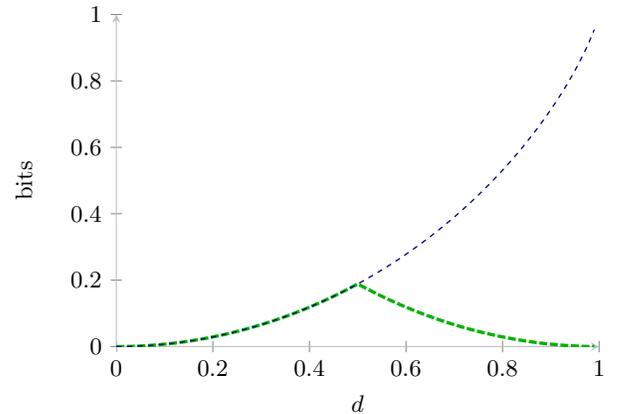
\begin{figure}
  \centering
\begin{tikzpicture}
   \begin{axis}[
     articleplot,
      xlabel={$d$},
      ylabel={bits},
      width=8cm,
      height=6cm,
      xmin=0, xmax=1,
      ymin=0, ymax=1]
      \addplot[no marks, line width=1.25pt, black!30!green, dash pattern=on 3pt off 1pt] table[x index=0, y index=4] {data/markov_2.dat}; \label{plot:bm_red}
      \addplot[no marks, line width=0.5pt, black!50!blue, dash pattern=on 2pt off 2pt] table[x index=0, y index=2] {data/markov_2.dat}; \label{plot:bm_mi}

   \end{axis}
\end{tikzpicture}
  \caption{The plot shows $I(Y_{t+1};Z_t)$ (\ref{plot:bm_mi}) and $\Ired(Y_{t+1};Y_t,Z_t)$ (\ref{plot:bm_red}) for the second example process.}
  \label{fig:binary_markov2b}
\end{figure}

The increase of mutual information $I(Y_{t+1};Z_t)$ is obvious from the definition of the process. For $d=0$ we have independence between both processes and for $d=1$ we have $Y_{t+1}=Z_t$. It is also clear that the redundant information with respect to $Y_{t+1}$  needs to be zero at the extremal points $d\in\{0,1\}$, because at these points the value of $Y_{t+1}$ depends either on $Y_t$ ($d=0$) or $Z_t$ ($d=1$) and therefore either $I(Y_{t+1};Z_t)=0$ or $I(Y_{t+1};Y_t)=0$ which both are upper bounds for the redundancy. 

On the other hand for $d=0.5$ the state of either process at time $t$ tells us something about the distribution of $Y_{t+1}$ and because the space of distributions of $Y_{t+1}$ is one-dimensional, this must be information about a change in the same direction, so there is positive redundancy. Observing one of the outcomes necessarily contributes to some extent to the prediction of the outcome of $Y_{t+1}$. We can now show this more rigourously, we have
\begin{eqnarray}
  p(y_{t+1}|y_t) &=& \frac{d}{2} \delta_{y_{t+1}  (1-y_t)} + \left( 1-\frac{d}{2} \right) \delta_{y_{t+1} y_t}, \\
  p(y_{t+1}|z_t) &=& \frac{1-d}{2} \delta_{y_{t+1} (1-z_t)} + \frac{1+d}{2} \delta_{y_{t+1} z_t}.
\end{eqnarray}
as the conditional distributions given the current state of either $Y_t$ or $Z_t$. To calculate $\Ired(Y_{t+1};Y_t,Z_t)$ we need to calculate the projected information $\Ipr{Z_{t}}{Y_t}{Y_{t+1}}$ and $\Ipr{Y_t}{Z_t}{Y_{t+1}}$ as the redundancy is the minimum of both terms. Because the space of distributions $\Delta(Y_{t+1})$ is one dimensional (it is simply the unit interval) we can make a simple illustrative argument to compute $\qpr{z_t=0}{Y_t},\qpr{z_t=1}{Y_t},\qpr{y_t=0}{Z_t}$ and $\qpr{y_t=1}{Z_t}$, which are the terms that are needed to calculate projected information.
From the illustration in FIG.~\ref{fig:ytspace} it can be seen that for $d\leq 0.5$,  $\qpr{z_t=0}{Y_t}(y_{t+1})=\qpr{y_t=0}{Z_t}(y_{t+1})=p(y_{t+1}|z_t=0)$ and $\qpr{z_t=1}{Y_t}(y_{t+1})=\qpr{y_t=1}{Z_t}(y_{t+1})=p(y_{t+1}|z_t=1)$. If we insert this into (\ref{eq:projective_info}) we get that $\Ipr{Z_{t}}{Y_t}{Y_{t+1}}=\Ipr{Y_t}{Z_t}{Y_{t+1}}=I(Y_{t+1};Z_t)$ for $d\leq 0.5$.

\begin{figure*}[t]
  \subfloat[$\Delta(Y_{t+1})$ for $d\leq 0.5$]{\label{fig:dleqhalf}
  \begin{tikzpicture}[distnode/.style={
        shape=diamond, style=fill, inner sep=1.5pt,
      }]

    \draw [square-square] (0,0) -- (10cm,0) node[pos=0,above]{\scriptsize $p(y_{t+1}=0)=1$}
     node[pos=1,above]{\scriptsize $p(y_{t+1}=1)=1$};
    \fill (5cm,0) node[distnode] {} node[below] {\scriptsize $p(y_{t+1})$};
    \draw[red!10] (1.25cm,0) node[distnode] {} node[below, black!50!red] {\scriptsize $p(y_{t+1}|y_t=0)$};
    \draw[red!10] (8.75cm,0) node[distnode] {} node[below, black!50!red] {\scriptsize $p(y_{t+1}|y_t=1)$};
    \fill[blue!50] (3.75cm,0) node[distnode] {} node[above, black!50!blue] {\scriptsize $p(y_{t+1}|z_t=0)$};
    \fill[blue!50] (6.25cm,0) node[distnode] {} node[above, black!50!blue] {\scriptsize $p(y_{t+1}|z_t=1)$};
    \fill[blue!50!red] (3.75cm,0) node[distnode, draw, fill=none] {}  node[below=0.5cm] {\scriptsize $\qpr{z_t=0}{Y_t}=\qpr{y_t=0}{Z_t}$};
    \fill[blue!50!red] (6.25cm,0) node[distnode, draw, fill=none] {} node[below=1.0cm] {\scriptsize $\qpr{z_t=1}{Y_t}=\qpr{y_t=1}{Z_t}$};
    \draw[blue!50!red] (3.75cm,0) -- (3.75cm,-0.5cm);
    \draw[blue!50!red] (6.25cm,0) -- (6.25cm,-1.0cm);
  \end{tikzpicture}}\\
    \subfloat[$\Delta(Y_{t+1})$ for $d\geq 0.5$]{\label{fig:dgeqhalf}
  \begin{tikzpicture}[distnode/.style={
        shape=diamond, style=fill, inner sep=1.5pt
      }]
    \draw [square-square] (0,0) -- (10cm,0) node[pos=0,above]{\scriptsize $p(y_{t+1}=0)=1$}
     node[pos=1,above]{\scriptsize $p(y_{t+1}=1)=1$};
    \fill (5cm,0) node[distnode] {} node[below] {\scriptsize $p(y_{t+1})$};
    \fill[red!10] (3.75cm,0) node[distnode] {} node[above,black!50!red] {\scriptsize $p(y_{t+1}|y_t=0)$};
    \fill[red!10] (6.25cm,0) node[distnode] {} node[above,black!50!red] {\scriptsize $p(y_{t+1}|y_t=1)$};
    \fill[blue!50] (1.25cm,0) node[distnode] {} node[below,black!50!blue] {\scriptsize $p(y_{t+1}|z_t=0)$};
    \fill[blue!50] (8.75cm,0) node[distnode] {} node[below,black!50!blue] {\scriptsize $p(y_{t+1}|z_t=1)$};

    \fill[blue!50!red] (3.75cm,0) node[distnode, draw, fill=none] {} node[below=0.5cm] {\scriptsize $\qpr{z_t=0}{Y_t}=\qpr{y_t=0}{Z_t}$};
    \fill[blue!50!red] (6.25cm,0) node[distnode, draw, fill=none] {}  node[below=1.0cm] {\scriptsize $\qpr{z_t=1}{Y_t}=\qpr{y_t=1}{Z_t}$};
    \draw[blue!50!red] (3.75cm,0) -- (3.75cm,-0.5cm);
    \draw[blue!50!red] (6.25cm,0) -- (6.25cm,-1.0cm);
  \end{tikzpicture}}
  \caption{Illustration of the conditional distributions of $Y_{t+1}$ for the second example process in the two cases $d\leq 0.5$ and $d\geq 0.5$. The line represents the one dimensional simplex, i.e. the space of probability distributions over $Y_{t+1}$ denoted by $\Delta(Y_{t+1})$ where $Y_{t+1}$ is a binary valued random variable. The black diamond represents the marginal distribution of $p(y_{t+1})$ and the shaded diamonds the conditionals given specific values of $Y_t$ and $Z_t$. It can now be seen that the projections are always equal to the conditional distributions closer to the marginal of $Y_{t+1}$. In particular, the projections are the same, no matter in which direction the projection is done (from $Y_t$ to $Z_t$ or vice versa). }
  \label{fig:ytspace}
\end{figure*}
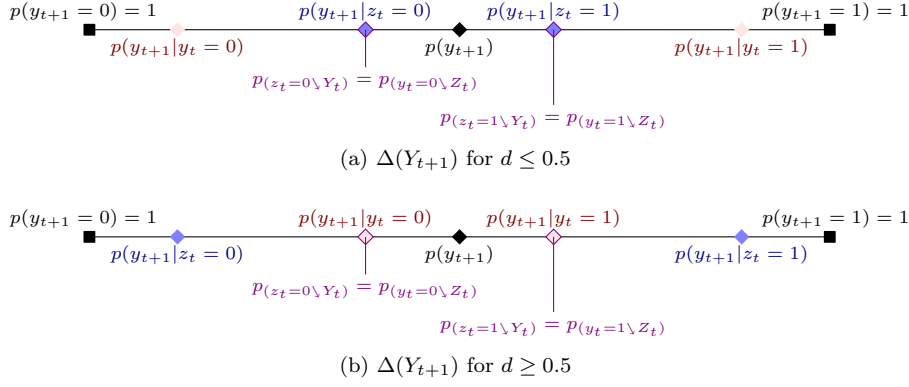

Conversely for $d\geq 0.5$ we get $\Ipr{Z_{t}}{Y_t}{Y_{t+1}}=\Ipr{Y_t}{Z_t}{Y_{t+1}}=I(Y_{t+1};Y_t)$ for $d\leq 0.5$. As $I(Y_{t+1};Z_t)$ and $I(Y_{t+1};Y_t)$ are perfectly symmetric, this then explains the form of the redundant information as in (\ref{plot:bm_red} in FIG.~\ref{fig:binary_markov2b}). Thus, even though $Z_t$ and $Y_t$ are completely independent, the mechanism, which is a random read-out (with distribution $d$,$(1-d)$), creates redundancy with respect to $Y_{t+1}$. Furthermore, this explains why we have no state-independent transfer entropy for $d\leq 0.5$.

\subsubsection{Open Loop Controllability}

\citet{Ashby1956} proposed  and \citet{touchette2000} confirmed  that there is a natural link between control theory and information theory. As shown by \citet{Touchette2004}, for a process, with initial state $X$ and final state $X'$, and a controller $C$ which are linked by the probability distribution $p(x'|x,c)$, the conditional mutual information $I(X';C|X)$ (which is the transfer entropy from the controller to the system) is a measure of controllability. Williams and Beer show in \cite{Williams2011} that the decomposition of transfer entropy using $\Imin$ as a redundancy measure has a close relation to the notion of open-loop controllability. We will now show, that this is still the case if $\Ired$ is used to decompose transfer entropy.

Perfect controllability, as defined in \cite{Touchette2004}, means that for all initial states $x\in\mathcal{X}$ and final states $x' \in \mathcal{X}$ there exists a control state $c\in \mathcal{C}$ such that $p(x'|x,c)=1$. The following equivalence is then shown in \cite{Williams2011}
\begin{lemma}
  \label{lem:pceq}
  A system is perfectly controllable iff for any $x'$ there exists a distribution $p(c|x)$ such that $p(x')=1$ for any distribution $p(x)$.
\end{lemma}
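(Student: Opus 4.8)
The plan is to prove the biconditional by establishing each implication separately, treating as the central object the marginal distribution on the final state induced by an input distribution $p(x)$ together with a control policy $p(c|x)$, namely
\[
p(x') = \sum_{x,c} p(x)\, p(c|x)\, p(x'|x,c).
\]
Both directions then reduce to translating between the existence of a per-pair control witness $c$ (which is what perfect controllability asserts) and the existence of a single policy that collapses this marginal onto a point mass for every input distribution.

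For the forward implication, I would assume perfect controllability and fix an arbitrary target $x'$. For each initial state $x$, the hypothesis provides a control $c_x \in \mathcal{C}$ with $p(x'|x,c_x)=1$; I would then define the deterministic policy $p(c|x)=\delta_{c,c_x}$. Substituting into the induced marginal gives $p(x') = \sum_x p(x)\, p(x'|x,c_x) = \sum_x p(x) = 1$ for every input distribution $p(x)$, since each coefficient $p(x'|x,c_x)$ equals $1$. This exhibits the policy required on the right-hand side.

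For the converse, I would assume that for each $x'$ there is a policy $p(c|x)$ with $p(x')=1$ for every input distribution, and recover the per-pair witnesses by probing with point masses. Fixing an arbitrary initial state $x_0$ and taking $p(x)=\delta_{x,x_0}$, the hypothesis forces $\sum_c p(c|x_0)\, p(x'|x_0,c)=1$. As this is a convex combination of numbers $p(x'|x_0,c)\le 1$, equality to $1$ requires $p(x'|x_0,c)=1$ for every $c$ in the support of $p(c|x_0)$; choosing any such $c$ yields a witness for the pair $(x_0,x')$. Since $x_0$ and $x'$ were arbitrary, perfect controllability follows.

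The only step carrying any content is this last convexity observation, and the feature that makes it go through is precisely the quantifier \emph{for any distribution $p(x)$}: it licenses testing the policy against degenerate, point-mass inputs and thereby isolating the behaviour at a single initial state. I therefore expect the main obstacle to lie not in any analytic subtlety but in stating the quantifier structure of the right-hand condition carefully enough that the point-mass specialization is clearly permitted; the remainder is bookkeeping.
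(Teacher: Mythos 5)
Your proof is correct. Note that the paper itself does not prove this lemma --- it defers entirely to the cited Williams and Beer reference --- so there is no in-paper argument to compare against; your two-directional argument (deterministic policy $p(c|x)=\delta_{c,c_x}$ built from the per-pair witnesses for the forward direction, and probing with point-mass inputs $p(x)=\delta_{x,x_0}$ plus the convex-combination-equals-one observation for the converse) is the standard proof one would expect, and it correctly handles the quantifier structure, in particular that the policy may depend on $x'$ but must be uniform over input distributions.
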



It follows also that if a system is perfectly controllable, there exists an $x'$ such that $p(x'|x)=1$ for each $x\in \mathcal{X}$, see \cite{Williams2011} for a proof. Now, a system has perfect open-loop controllability iff it has perfect controllability and $I(X;C)=0$. Moreover, in \cite{Williams2011} it is shown that the following theorem holds:
\begin{theorem}[Williams and Beer] \label{thm:open_loop}
  A system is perfectly open-loop controllable iff it is perfectly controllable with vanishing state-dependent transfer entropy (using $\Imin$) from $C$ to $X'$.
\end{theorem}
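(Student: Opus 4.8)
The plan is to peel off the hypothesis of perfect controllability, which sits on both sides of the biconditional, and reduce the theorem to the single equivalence that, for a perfectly controllable system realized by a control policy $p(c|x)$, the open-loop condition $I(X;C)=0$ holds if and only if the state-dependent transfer entropy from $C$ to $X'$ vanishes. Since perfect open-loop controllability is, by definition, perfect controllability together with $I(X;C)=0$, and perfect controllability already appears on the right, proving this one equivalence completes the argument.

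First I would identify the SDTE with the synergistic atom of the decomposition of $I(X';C\mid X)$ for sources $\mathbf{R}=\{X,C\}$ and target $X'$. As in the transfer-entropy decomposition obtained from \eqref{eq:pia_decomp} and \eqref{eq:pia_mono_decomp}, this gives $I(X';C\mid X)=\piaBW{\mathbf{R}}(X';\{C\})+\piaBW{\mathbf{R}}(X';\{X,C\})$, so the SDTE is $\piaBW{\mathbf{R}}(X';\{X,C\})$, which by the $\Imin$-analogue of \eqref{eq:synergy_def} equals $I(X';X,C)-I(X';X)-I(X';C)+\Imin(X';X,C)$ and is non-negative. The reformulation I would lean on is that, whenever $I(X;C)=0$, the symmetry of the interaction information collapses the co-information part to $I(X;C\mid X')$, yielding
\begin{equation*}
  \mathrm{SDTE}=I(X;C\mid X')+\Imin(X';X,C),
\end{equation*}
a sum of two manifestly non-negative terms.

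For the forward direction I would use Lemma~\ref{lem:pceq} and the consequence noted after it: a perfectly controllable system admits, for the chosen target, a control policy driving the final state to it with certainty. Open-loopness means this policy satisfies $p(c|x)=p(c)$, i.e.\ $C\perp X$, and since the control alone then achieves the target, the final state becomes a function of $C$ independent of the initial state, so $I(X';X)=0$. This forces every specific information $I(X'{=}x';X)$ to vanish, hence $\Imin(X';X,C)=0$, and makes $X$ and $C$ conditionally independent given $X'$, hence $I(X;C\mid X')=0$; the displayed identity then gives $\mathrm{SDTE}=0$. For the converse I would argue the contrapositive under perfect controllability: if $I(X;C)>0$ the controller genuinely exploits feedback, and perfect controllability makes the attained value of $X'$ depend jointly on $X$ and $C$ in a way not reducible to either source alone, so the synergy $\piaBW{\mathbf{R}}(X';\{X,C\})$ is strictly positive, i.e.\ $\mathrm{SDTE}>0$.

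The hard part will be the step turning perfect controllability into the two vanishing conditions $\Imin(X';X,C)=0$ and $I(X;C\mid X')=0$, because $\Imin$ is built from a pointwise minimum of specific informations and must be controlled outcome by outcome in $x'$: one has to show that state-independence of the realizing control is exactly the configuration in which, for each reachable $x'$, one source contributes no specific information, so that the redundancy meets the co-information bound and the synergy closes to zero. The care is in guaranteeing that the open-loop policy genuinely achieves the target rather than merely satisfying $C\perp X$ — for instance an \textsc{Xor} mechanism, where an initial-state-independent control cannot by itself achieve the target and the transfer is purely state-dependent, must be excluded. By contrast the decomposition identities and the interaction-information symmetry are routine.
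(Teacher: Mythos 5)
First, a point of orientation: the paper does not itself prove this theorem --- it is quoted from \cite{Williams2011} --- and what the paper actually proves is the analogous statement with $\Ired$ in place of $\Imin$. The only meaningful comparison is therefore with that $\Ired$ argument, whose structure a proof of the $\Imin$ version should mirror. Your forward direction is essentially fine: under an open-loop realizing controller the final state is determined by $C$ alone, so $I(X';X)=0$ forces every specific information $I(X'{=}x';X)$ to vanish and hence $\Imin(X';X,C)=0$, while $X\perp C$ together with $X'$ being a function of $C$ gives $I(X;C\mid X')=0$; your identity $\mathrm{SDTE}=I(X;C\mid X')+\Imin(X';X,C)$, which is valid exactly when $I(X;C)=0$, then closes that half.

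The genuine gap is the converse. There you assert that $I(X;C)>0$ plus perfect controllability makes the synergistic atom ``strictly positive'' because the controller ``genuinely exploits feedback''; this is a restatement of what must be shown, not an argument, and the identity you lean on elsewhere is unavailable here precisely because $I(X;C)=0$ is what you no longer have. The route the paper takes for $\Ired$ (and which transfers to $\Imin$) needs two ingredients you are missing. First, perfect controllability with a realizing controller already forces the redundancy term to vanish (this is Lemma~\ref{lem:perfectredun}; for $\Imin$ it holds because $X'$ is then deterministic, so every specific information is zero), and likewise $p(x'|x)=p(x')$. Hence $\mathrm{SDTE}=0$ collapses to $I(X';X,C)-I(X';X)-I(X';C)=0$, which under $p(x'|x)=p(x')$ is exactly $I(X';X\mid C)=0$, i.e.\ $p(x'|x,c)=p(x'|c)$. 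Second, this factorization must be combined with perfect controllability of the channel --- for every $x$ and $x'$ there is $c$ with $p(x'|x,c)=1$ --- to conclude that for every target $x'$ some $c$ gives $p(x'|c)=1$, which is open-loop controllability via Lemma~\ref{lem:pceq}. Without the vanishing-redundancy step and the passage through $I(X';X\mid C)=0$, your converse has no content; your ``hard part'' paragraph also mislocates the difficulty, since controlling $\Imin$ outcome by outcome is only needed where it is easy (the forward direction), not where the work actually lies.
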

We will now also show that this theorem still holds in the case where the decomposition using our measure of redundant information $\Ired$ is used. To prove the theorem we will use the following lemma. It is shown in \cite{Williams2011} that the condition of the lemma is fulfilled for any perfect open-loop controller and thus proves the direct part of the theorem (perfect open-loop controllability implies perfect controllability with zero SDTE using $\Ired$ as a redundancy measure):

\begin{lemma}
If
\begin{equation*}
  p(x'|x,c) = p(x'|c) \quad  x' \in \mathcal{X}, \forall x \in \mathcal{X}, c \in \mathcal{C}
\end{equation*}
then the STDE from $C$ to $X'$ is zero.
\end{lemma}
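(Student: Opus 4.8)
The plan is to recognize the state-dependent transfer entropy as the synergistic PI-atom $\pia{\mathbf{R}}(X';\{X,C\})$ with base set $\mathbf{R}=\{X,C\}$ and target $X'$, and then to show that this atom vanishes by sandwiching it between a non-positive and a non-negative bound. First I would write out the synergistic atom using the explicit formula (\ref{eq:synergy_def}), adapted to the present variables,
\[
  \pia{\mathbf{R}}(X';\{X,C\}) = I(X';X,C) - I(X';X) - I(X';C) + \Ired(X';X,C).
\]

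The key observation is that the hypothesis $p(x'|x,c)=p(x'|c)$ is exactly the statement that $X'$ and $X$ are conditionally independent given $C$, so that $I(X';X\mid C)=0$. By the chain rule $I(X';X,C)=I(X';C)+I(X';X\mid C)$, this collapses the first and third terms and leaves
\[
  \pia{\mathbf{R}}(X';\{X,C\}) = \Ired(X';X,C) - I(X';X).
\]
Now the monotonicity bound established earlier (namely $\Ired(Z;X,Y)\le I(Z;X)$, here read as $\Ired(X';X,C)\le I(X';X)$) forces the right-hand side to be non-positive. On the other hand, the non-negativity of the synergistic atom, proven above as a lemma for $\Ipr{X}{Y}{Z}$ and inherited by $\Ired=\min\{\Ipr{X}{Y}{Z},\Ipr{Y}{X}{Z}\}$ since the minimum of two non-negative quantities is non-negative, gives the reverse inequality. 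Squeezing the two bounds yields $\pia{\mathbf{R}}(X';\{X,C\})=0$, i.e. zero SDTE.

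The argument is short, and the only genuine care lies in the setup rather than in any estimate. The main obstacle is the bookkeeping of the translation from the control-theoretic picture (initial state $X$, controller $C$, final state $X'$) into the redundancy-decomposition notation, and in particular checking that SDTE is the synergistic term $\pia{\mathbf{R}}(X';\{X,C\})$ and not the unique term. Once the hypothesis is correctly interpreted as the conditional independence $X'\perp X\mid C$, the remainder is a direct substitution into (\ref{eq:synergy_def}) followed by the two inequalities already in hand, so no new property of projected information is required.
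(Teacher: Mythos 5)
Your proof is correct and follows essentially the same strategy as the paper: both identify the SDTE with the synergistic atom $\pia{\mathbf{R}}(X';\{X,C\})$ and squeeze it between zero (non-negativity of the synergistic term, established earlier) and a non-positive upper bound derived from the hypothesis $p(x'|x,c)=p(x'|c)$. The only difference is one of packaging --- the paper bounds the atom by substituting $\Ipr{X}{C}{X'}$ for $\Ired$ and observes that the conditional Kullback--Leibler terms in the convexity reformulation vanish, whereas you collapse $I(X';X,C)-I(X';C)$ to $I(X';X\mid C)=0$ via the chain rule and then invoke the already-established bound $\Ired(X';X,C)\le I(X';X)$; unpacked, these are the same inequality.
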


\begin{proof}
  From (\ref{eq:redundancy_def}) and (\ref{eq:pia_decomp}) it follows that
  \begin{eqnarray}
    \pia{}{(X';\{C,X\}) } &\leq& I(X';X,C)-I(X';X) \nonumber \\*
    & & -I(X';C)+\Ipr{X}{C}{X'}.
  \end{eqnarray}
  The synergy is non-negative and now the right hand side can be reformulated as in (\ref{eqn:before_convex}). But with $p(x'|x,c) = p(x'|c)$ $\forall x,x' \in \mathcal{X}, c \in \mathcal{C}$ the positive Kullback-Leibler divergences in (\ref{eqn:before_convex}) all vanish. Therefore $\pia{}{(X';\{C,X\}) }=0$.
\end{proof}
For the converse direction, perfect controllability and vanishing STDE (from $C$ to $X'$) imply perfect open-loop controllability, we first need to prove the following lemma:

\begin{lemma}
  \label{lem:perfectredun}
  If a system is perfectly controllable with a distribution $p(c|x)$ then $\Ired(X';X,C)=0$.
\end{lemma}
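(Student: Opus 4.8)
The plan is to reduce everything to the degeneracy of the final state $X'$ and then invoke the upper bound $\Ired(X';X,C) \le I(X';X)$ already established in the monotonicity argument, together with the non-negativity of projected information. In other words, I expect this lemma to be essentially immediate once the right consequence of perfect controllability is extracted; the real content sits in correctly reading the hypothesis.

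First I would recall the consequence of perfect controllability noted immediately after Lemma~\ref{lem:pceq}: there is a single target state $x^\ast \in \mathcal{X}$ with $p(x^\ast|x)=1$ for every $x \in \mathcal{X}$. Committing to the controller policy $p(c|x)$ that realizes this therefore produces a joint distribution $p(x,c,x')$ under which $X'$ equals $x^\ast$ almost surely. Equivalently, every conditional $p(\cdot|x)$ over $X'$ collapses onto the single (degenerate) marginal $p(\cdot)$ in $\Delta(X')$, and likewise for $p(\cdot|c)$.

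Next I would use this degeneracy to kill the mutual information: since $p(x'|x)=p(x')$ for all $x$, each divergence $\DivKL{p(x'|x)}{p(x')}$ vanishes, so $I(X';X)=0$ (and, identically, $I(X';C)=0$). Finally I would apply the bound $\Ired(X';X,C)\le I(X';X)$ from the monotonicity step to get $\Ired(X';X,C)\le 0$, and combine this with the non-negativity of $\Ired$ (a consequence of the non-negativity of projected information proved earlier) to conclude $\Ired(X';X,C)=0$. A purely geometric variant gives the same conclusion directly: because all conditionals over $X'$ coincide with the marginal, both terms $\Ipr{X}{C}{X'}$ and $\Ipr{C}{X}{X'}$ are sums of vanishing Kullback--Leibler divergences, so their minimum is zero.

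The hard part will not be any computation but pinning down the phrase ``perfectly controllable with a distribution $p(c|x)$'': it must be read as having fixed one policy that steers the output to a \emph{single} target $x^\ast$, so that $X'$ is genuinely constant rather than merely steerable to an arbitrary target. This is exactly the step I would justify by explicitly citing the post-Lemma~\ref{lem:pceq} remark; once that reading is secured, the degeneracy of $X'$ and hence the vanishing of $\Ired$ follow with no further work.
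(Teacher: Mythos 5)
Your proposal is correct and rests on the same key observation as the paper's proof: once the policy $p(c|x)$ is fixed, Lemma~\ref{lem:pceq} forces $X'$ to be deterministic, so all conditionals over $X'$ collapse to the marginal. The paper concludes by noting that $\Ccl(\langle X \rangle_{X'})$ degenerates to the single point $\{p(\cdot)\}$ so that $\Ipr{C}{X}{X'}=0$ directly --- exactly the ``geometric variant'' you mention --- while your primary route via $\Ired(X';X,C)\le I(X';X)=0$ plus non-negativity is an equally valid one-line finish from the same degeneracy.
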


\begin{proof}
  From Lemma~\ref{lem:pceq} it follows that $p(x')=1$ for some $x'\in \mathcal{X}$ as well as $p(x'|x)=1$ for all $x\in \mathcal{X}$ and therefore $\Ccl(\langle X \rangle_Z)$ in $\Delta(X')$ is just $\{p(x')\}$ which implies $\Ipr{C}{X}{X'}=0$. Thus it follows that $\Ired(X';X,C)=0$.
\end{proof}

Thus, for the converse direction, starting with perfect controllability and vanishing STDE, we have the following equality
\begin{align}
  0 &= \pia{}{(X';\{C,X\}) } \\
  &= I(X';X,C)-I(X';X) \nonumber \\*
  &  -I(X';C)+\Ired(X';X,C)\\
  &= I(X';X,C)-I(X';X)-I(X';C)\\
  &= \sum_{x,c,x'} p(x',x,c) \log \frac{p(x'|x,c)p(x')}{p(x'|c)p(x'|x)},\\
  \intertext{as we also have $p(x'|x)=p(x')$ because of perfect controllability,}
  &= \sum_{x,c,x'} p(x',x,c) \log \frac{p(x'|x,c)}{p(x'|c)}.
\end{align}
We also know that for every $x \in \mathcal{X}$ there exists $x'\in \mathcal{X}$ and $ c \in \mathcal{C}$ such that $p(x'|x,c)=1$. Thus for any $x'\in \mathcal{X}$ there exists a $ c \in \mathcal{C}$ such that $p(x'|c)=1$. It is shown in \cite{Williams2011} that this is equivalent to open-loop controllability. 

Hence, we have shown that Theorem~\ref{thm:open_loop} also holds if we apply $\Ired$ as the underlying redundancy measure and the relation between open-loop controllability and decomposition of transfer entropy is transferable to our new measure.

\section{Discussion}

The motivation for this paper was to overcome the shortcomings of current measures of redundancy and synergy. We introduced a new measure for bivariate redundant information. Redundant information between two random variables is information that is shared between two variables. In contrast to mutual information, redundant information denotes information with respect to the outcome of a third variable. Our measure is conceptually motivated by measuring similarities in the direction of change in the outcome distribution, depending on which input is observed. We proved that the construction adheres to properties of redundancy as stated in the literature, and can be used for a non-negative decomposition of mutual information. The measure is closely related to the concept of \textit{minimal information} as introduced in \cite{Williams2010}.

We demonstrated in several examples that $\Ired$ follows several intuitions about redundancy. Furthermore, it is possible to decompose \textit{transfer entropy} as considered in \cite{Williams2011}; in particular we showed that using \textit{minimal information} instead of \textit{redundant information} to decompose \textit{transfer entropy} can lead to the detection of fake state-dependent transfer entropy.  We were able to prove that the results about open-loop controllability from \cite{Williams2011} are also applicable to the decomposition using $\Ired$. Thus our measure is able to serve as a replacement for the bivariate version of minimal information.

A particular insight of our definition is the emphasis of mechanisms in the concept of redundant information, which has been rather neglected in the literature so far. Firstly, we linked bivariate redundant information in the case of a copying mechanism to the mutual information between the input variables. We identify redundant information that already appears in the inputs with \textit{source redundancy}, contrary to redundant information that is only due to the mechanism, as demonstrated in the \textsc{And}-gate or the 50:50-readout. We identify this kind of redundancy with \textit{mechanistic redundancy}. This is in contrast to the redundancy measure proposed in \cite{Griffith2012} which does not capture \textit{mechanistic redundancy}. The separation of both kinds of redundancy is not explicit at this point, and currently we do not yet propose a clear and obvious separation of mechanistic and source contributions of redundant information. 

Future work will show whether it is possible to separate the two concepts of mechanistic and source redundancy when they appear simultaneously. Another limitation we currently have is the restriction to a bivariate measure. In general, however, there are applications where it is interesting to be able to compute redundant information between more than two variables \cite{Williams2010,Flecker2011}. However, the geometric structure for this problem gets significantly more complex, and it is, for example, not entirely clear by what the \textit{identity property} should be replaced in the multivariate case. There are several ways to generalize mutual information to a multivariate measure, none of which seems to be fitting in this case. The construction of a multivariate measure of redundant information, as well as a generalization to continuous random variables is part of ongoing research.

\begin{acknowledgments}
DP thanks Virgil Griffith for helpful discussions. This research was partially supported (CS and DP) by the European Commission as part of the CORBYS (Cognitive Control Framework for Robotic Systems) project under contract FP7 ICT-270219. The views expressed in this paper are those of the authors, and not necessarily those of the consortium.
\end{acknowledgments}

\bibliographystyle{apalike}
\bibliography{redundancy}

\end{document}